\documentclass[showpacs,onecolumn,aps,notitlepage]{revtex4-1}
\usepackage{qcircuit}
\usepackage[dvips]{graphicx}
\usepackage{amsmath,amssymb,amsthm,mathrsfs,amsfonts,dsfont}
\usepackage{subfigure, epsfig}
\usepackage{braket}
\usepackage{bm}
\usepackage{enumerate}
\usepackage{algorithm}
\usepackage{algpseudocode}
\usepackage{diagbox}
\usepackage{physics}
\usepackage{color}
\usepackage[dvipsnames]{xcolor}
\usepackage{multirow}
\usepackage[marginal]{footmisc}
\usepackage{comment}
\usepackage{tikz}
\usepackage{cancel}
%\usetikzlibrary{quantikz}
\usepackage[]{qcircuit}
\usepackage{soul}
\usepackage{pgfplots}
\usepackage{mathrsfs}
\usetikzlibrary{arrows}
%\usepackage{slashbox}
%\usepackage{tikzpeople}
%\usepackage{tikz-3dplot}
%\usetikzlibrary{fit}
\usepackage{makecell}
\usetikzlibrary{arrows}
\usetikzlibrary{shapes,fadings,snakes}
\usetikzlibrary{decorations.pathmorphing,patterns}
\usetikzlibrary{calc}
\usetikzlibrary{positioning}
\usepackage[colorlinks = true]{hyperref}
% \usepackage{hyperref}

%\usepackage{fancybox, graphicx}
%\usepackage[skins]{tcolorbox}
%\newtcolorbox[auto counter]{tbox}[2][]{%
%    enhanced, float=hbt, drop fuzzy shadow southeast,
%    colback=white!5!white, colframe=white!50!black,
%    width= .97\columnwidth,sharp corners,boxrule=0.8pt,
%    title={Algorithm \thetcbcounter: #2}, #1
%}

\graphicspath{{./figure/}}

\newtheorem{theorem}{Theorem}

\newtheorem{lemma}{Lemma}

\newtheorem{proposition}{Proposition}
  % Use Input in the format of Algorithm
 % Use Output in the format of Algorithm

\newcommand{\mc}{\mathcal}

\newcommand{\mbb}{\mathbb}

\newcommand{\comments}[1]{}

\newcommand{\E}{\mathcal{E}}

\begin{document}

\title{The sudden death of quantum advantage in correlation generations}
 \author{Weixiao Sun$^{1}$}\email{These authors contributed equally to this work.}
 \author{Fuchuan Wei$^{2,3}$}\email{These authors contributed equally to this work.}
 \author{Yuguo Shao$^{2,3}$}
 \author{Zhaohui Wei$^{2,4,}$}\email{weizhaohui@gmail.com}
 \affiliation{$^{1}$Institute for Interdisciplinary Information Sciences, Tsinghua University, Beijing 100084, China\\$^{2}$Yau Mathematical Sciences Center, Tsinghua University, Beijing 100084, China\\$^{3}$Department of Mathematics, Tsinghua University, Beijing 100084, China\\$^{4}$Yanqi Lake Beijing Institute of Mathematical Sciences and Applications, Beijing 101407, China}

\begin{abstract}
As quantum error corrections still cannot be realized physically, quantum noise is the most profound obstacle to the implementations of large-scale quantum algorithms or quantum schemes. It has been well-known that if a quantum computer suffers from too strong quantum noise, its running can be easily simulated by a classical computer, making the quantum advantage impossible. Generally speaking, however, the dynamical process that how quantum noise of varying strengths from 0 to a fatal level impacts and destroys quantum advantage has not been understood well. Undoubtedly, achieving this will be extremely valuable for us to understand the power of noisy intermediate-scale quantum computers. Meanwhile, correlation generation is a precious theoretical model of information processing tasks in which the quantum advantage can be precisely quantified. Here we show that this model also provides us a valuable insight into understanding the impact of noise on quantum advantage. Particularly, we will rigorously prove that when the strength of quantum noise continuously goes up from 0, the quantum advantage gradually declines, and eventually fades away completely. Surprisingly, in some cases we observe an unexpected phenomenon we call the sudden death of the quantum advantage, i.e., when the strength of quantum noise exceeds a certain point, the quantum advantage disappears suddenly from a non-negligible level. This interesting phenomenon reveals the tremendous harm of noise to quantum information processing tasks from a new viewpoint.
\end{abstract}

\date{\today}

\maketitle

\section{Introduction}

Quantum noise is the major obstacle in building large-scale quantum computers. Although remarkable progresses on their physical implementations have been made~\cite{arute2019quantum,wu2021strong,zhu2022quantum,zhong2020quantum,zhong2021phase}, we still do not have enough computational resources to perform quantum error corrections, which are believed to be the ultimate approach to fight against quantum noise. As a result, for now we can only work with the so-called noisy intermediate-scale quantum (NISQ) computers~\cite{preskill2018quantum,bharti2022noisy,chen2022complexity}, which are usually based on short-depth quantum circuits that do not involve any quantum error corrections. In recent years, in order to show that NISQ computers can efficiently solve computational problems that any classical computers cannot, researchers have been intensively utilizing them to sample the outputs of random quantum circuits and solve optimization problems like calculating the ground-state energy of physically relevant Hamiltonians. To reliably exhibit quantum advantage, one not only needs to improve the quality of quantum hardware and mitigate experimental imperfections such that NISQ computers have good precisions, but also needs to understand how precisely noise impacts the computational advantage of quantum computing. In this paper, we will focus on the second aspect.

To figure out the impact of noise on quantum advantage, we have to face two major challenges. First, the rigorous quantum advantage is usually hard to characterize mathematically. For example, the discovery of Shor's algorithm was a milestone in the developments of quantum computing, but in principle whether it can achieve exponential speedup over the optimal classical counterpart has not been rigorously proved or disproved~\cite{shor1994algorithms}, as the classical computational complexity of integer factorization has not been settled. In some specific computational models such as query complexity~\cite{grover1996fast,buhrman2002complexity}, communication complexity~\cite{yao1993quantum,brassard2003quantum}, and some machine learning models~\cite{gao2018quantum,huang2022quantum}, researchers have managed to identify the existence of quantum advantage mathematically, but this is often achieved by roughly upper bounding the power of quantum computing and meanwhile lower bounding that of its classical counterpart on a same task. Indeed, on most problems it is very hard to \emph{quantify} quantum advantage. To look into the impact of noise with different strengths on quantum advantage, a proper measure for quantifying it is desirable.

Second, the influences of noise on quantum circuits are highly complicated, and have not been understood well. In the past several years, because of the impact of the quantum supremacy experiments based on random quantum circuits~\cite{arute2019quantum,wu2021strong,zhu2022quantum}, a huge amount of efforts have been devoted to analyzing the properties of noisy random quantum circuits~\cite{aaronson2017complexity,aaronson2020classical,boixo2018characterizing,dalzell2022random,deshpande2022tight,aharonov2023polynomial,cheng2023efficient,hangleiter2023computational,napp2022efficient,mishra2024classically,aharonov2023polynomial,dalzell2024random,gao2024limitations,zhou2020limits,pan2022solving,yung2017can}. Since quantum advantage is certified only when a quantum computer can solve a computational problem much more easily than its classical counterparts, a crucial topic in these works is when sampling from the outputs of noisy random quantum circuits can and cannot be efficiently simulated by classical computers. For example, in Ref.\cite{aharonov2023polynomial} it was proved that in the presence of a constant rate of noise per quantum gate, sampling from the output distribution of a noisy random circuit can be approximately simulated by an efficient classical algorithm, as long as the so-called anti-concentration property is assumed~\cite{aaronson2013computational}. This surprising result highlights the importance of further decreasing error rates in NISQ computers. In Ref.\cite{dalzell2024random}, strong evidence was provided to show that under certain conditions, local errors are scrambled by random quantum circuits and can be approximated by global white noise, justifying an assumption that played a key role in Google's supremacy experiment~\cite{boixo2018characterizing,arute2019quantum}. Meanwhile, when demonstrating quantum supremacy based on random quantum circuits, a challenging issue is how to verify the existence of quantum advantage at a reasonable amount of experimental cost, for which several tools like the heavy-outcome generation fidelity~\cite{aaronson2017complexity} and the linear cross-entropy benchmark~\cite{arute2019quantum} have been proposed and studied~\cite{gao2024limitations,zhou2020limits,pan2022solving}.

In addition to sampling the outputs of random quantum circuits, very recently the quantum supremacy of NISQ computers working on optimization problems has also been investigated~\cite{stilck2021limitations,gonzalez2022error,de2023limitations}, and these works suggest that quantum advantage is possible only when the error rate is extremely low. Please see Refs.\cite{stilck2021limitations,gonzalez2022error,de2023limitations,gao2018efficient,noh2020efficient,ben2013quantum,fefferman2023effect,mele2024noise,wang2021noise,shao2023simulating} for more on how noise affects quantum computing.

Despite these encouraging progresses, however, in most cases the quantum advantage of noisy NISQ computers can only be identified in certain small or rough areas of the parameter spaces of quantum circuits, for example on one side of certain transition points~\cite{morvan2023phase,ware2023sharp,cheng2023efficient}. As a result, the dynamical process that how quantum noise with the strength continuously increasing from 0 to a fatal level harms the quantum advantage in a specific NISQ computer step by step, and eventually destroys it completely, is still far from being understood well. Undoubtedly, achieving this will be extremely important for us to develop NISQ computers with practical quantum supremacy.

In this paper, we will focus on a simpler sampling model than sampling from the outputs of noisy random quantum circuits, in which two separated parties, Alice and Bob, cooperate and try to sample two separated outcomes according to a target joint probability distribution~\cite{zhang2012game,Jain2013efficient}. This task can be fulfilled by either a quantum scheme or a classical one. In the former, Alice and Bob can realize the sampling by locally measuring a shared quantum seed state without any communications, and then outputting the corresponding outcomes; in the latter, Alice and Bob can replace the shared quantum state with a public randomness, and the other actions allowed are local classical operations only. A crucial feature of this elegant model is that a quantum scheme can enjoy a remarkable advantage over any classical one, furthermore, in the noiseless case the smallest computational costs needed for both scenarios, which are measured by the sizes of the shared quantum states and the public randomness respectively, can be characterized exactly~\cite{zhang2012game,Jain2013efficient}. In other words, the quantum advantage in this model can be precisely quantified. Based on this key fact, here we show that this model also allows us to rigorously analyze the dynamical process that how noise of continuously increasing strengths impacts the quantum advantage. Specifically, on an arbitrary target correlation, we first define a measure for the quantum advantage in this model, then we prove that when the strength of noise continuously goes up from zero, the quantum advantage gradually decreases, and eventually fades away completely. Surprisingly, we will report an unexpected phenomenon we call the \emph{sudden death of quantum advantage}, i.e., in some cases when the strength of noise reaches a certain point, the quantum advantage disappears suddenly from a non-negligible level. This interesting phenomenon brings us a  new viewpoint to appreciate the harm of noise in quantum information processing tasks.

\section{The Setting}

Suppose two separate parties, Alice and Bob, want to sample two outcomes $x\in[m]\equiv\{1,2,...,m\}$ and $y\in[n]$ respectively according to a discrete joint probability distribution $P=[P(x,y)]_{x\in[m],y\in[n]}$, i.e., the probability that Alice outputs $x$ and Bob outputs $y$ is $P(x,y)$. If this is the case, we say Alice and Bob generate $P$, and $P$ is also called a classical correlation. We can write $P$ as an $m\times n$ nonnegative matrix (A nonnegative matrix is a matrix with all the entries nonnegative), and without loss of generality in the current work we assume that $m=n$ for simplicity.
When generating $P$, we suppose that any communication between Alice and Bob is not allowed. Then if the target $P$ is not a product distribution, that is, $\operatorname{rank}(P)>1$, they need to share some initial resource to generate $P$. If Alice and Bob choose to generate $P$ with a classical protocol, the initial shared resource can be another shared classical correlation $P'$, which is called a \textit{seed} correlation. In this case, Alice and Bob can perform local classical operations on $P'$ to generate $P$. Similarly, they can also generate $P$ with a quantum protocol, in which the initial shared resource can be a shared quantum state $\rho$ called a \textit{seed} state. On this state, Alice and Bob perform proper local positive operator-valued measurements (POVM) on $\rho$ and output the corresponding outcomes to obtain $P$.
We define the size of $P'$ or $\rho$ to be half of the total number of bits or qubits needed to record $P'$ or $\rho$, and denote it as $\operatorname{size}(P')$ or $\operatorname{size}(\rho)$ respectively.
Please see Refs.\cite{zhang2012game,Jain2013efficient} for more details on this model of correlation generations.

For a given target $P$, we are concerned about the minimum value of $\operatorname{size}(P')$ (or $\operatorname{size}(\rho)$) for $P'$ (or $\rho$) that can generate $P$, which is called the \textit{classical (quantum) correlation complexity} of $P$, denoted $\mathrm{R}(P)$ (or $\mathrm{Q}(P)$), correspondingly.
It turns out that $\mathrm{R}(P)$ and $\mathrm{Q}(P)$ have been fully characterized~ \cite{zhang2012game, Jain2013efficient}. Specifically, it holds that
\begin{align*}
\mathrm{R}(P)=\left\lceil\log_2 \operatorname{rank}_{+}(P)\right\rceil
\end{align*}
and
\begin{align*}
\mathrm{Q}(P)&=\left\lceil\log_2 \operatorname{rank}_{\mathrm{psd}}(P)\right\rceil.
\end{align*}
For a nonnegative matrix $P\in\mbb{R}_{\ge0}^{n\times n}$, $\operatorname{rank}_+(P)$ is the nonnegative rank defined to be the minimum integer $r$ such that $P$ can be expressed as the sum of $r$ nonnegative rank-$1$ matrices.
$\operatorname{rank}_{\mathrm{psd}}(P)$ is the positive semidefinite rank (PSD-rank) defined to be the minimum integer $r$ such that there exist $r \times r$ positive semidefinite matrices $C_x, D_y \in \mathbb{C}^{r \times r}$ satisfying $P(x, y)=\tr\left(C_x D_y\right)$ for all $x$ and $y$, and this factorization is called a \textit{PSD decomposition} \cite{Samuel2012linear,Fawzi2015psdrank}. Although both of these two ranks are NP-hard to compute~\cite{vavasis2010complexity,shitov2017complexity}, this problem provides us with a precious model on which the quantum advantage can be quantified precisely. Indeed, it has been known that there exists such a classical correlation $P$ that $\operatorname{rank}_{+}(P)\gg\operatorname{rank}_{\mathrm{psd}}(P)$~\cite{Samuel2012linear}, implying that quantum schemes can have a remarkable advantage over classical ones in generating correlations.

In the quantum protocol, however, the states that Alice and Bob share may suffer from noise, due to various experimental imperfections, which may affect the above quantum advantage seriously. Analyzing this kind of affection is the major motivation of the current paper. For simplicity, here we suppose that all quantum noise is concentrated on the preparations and the quantum memories for initial seed states, and assume that all the other quantum operations are noiseless. Meanwhile, we assume that classical protocols are always noiseless. More specifically, in a quantum protocol we assume that the seed state $\rho$ suffers from a bipartite quantum noise of the form $\E_\lambda\otimes\E_\lambda$, i.e., each of the two subsystems of $\rho$ goes through a global depolarizing noise channel $\mc{E}_\lambda$ with a strength $\lambda$, which keeps the subsystems unchanged with probability $1-\lambda$ and replaces it with the maximally mixed state with probability $\lambda$. We would like to point out that this kind of noise can appear naturally in real-life quantum systems. For example, Ref.\cite{dalzell2024random} shows that under certain conditions the overall effect of local quantum errors in random quantum circuits can be approximated by global white noise. For simplicity, here we assume that all the other quantum operations are noiseless. When the strength $\lambda=0$, as mentioned a quantum scheme for correlation generations may exhibit a remarkable quantum advantage over its classical counterparts, while when $\lambda=1$, the initial seed state degrades into the white noise completely, making quantum advantage impossible (Alice and Bob can only produce a product distribution). In the rest of the paper, we will characterize the process that as the noise strength $\lambda$ continuously increases from $0$ to $1$, how the advantage of quantum schemes over classical ones gradually disappears.

If a classical correlation $P$ can be generated by $\rho$ under the noise channel $\E_\lambda\otimes\E_\lambda$, we denote this fact to be $\rho\xrightarrow{\lambda}P$.
Under such a noise channel, we define the \textit{quantum advantage} in generating a classical correlation P, denoted $\mathcal{S}_{\lambda}(P)$, as the ratio between the lowest classical cost of generating $P$ and the smallest size of $\rho$ such that $\rho\xrightarrow{\lambda}P$. Formally,
\begin{align*}
\mathcal{S}_{\lambda}(P)=\frac{\mathrm{R}(P)}{\inf_{\rho}\textup{size}(\rho)},
\end{align*}
where the infimum is taken over all $\rho$ such that $\rho\xrightarrow{\lambda}P$. It can be seen that this definition is consistent with our intuition on quantum advantage: If $\mathcal{S}_{\lambda}(P)$ is very large, say $10$, then the classical cost of generating $P$ is at least $9$ times higher than that of the optimal noisy quantum protocol that suffers from the noise channel $\E_\lambda\otimes\E_\lambda$.

Before proceeding, we first analyze the quantum advantage of noiseless quantum protocols in generating classical correlations. It has been known that for an arbitrary classical correlation $P$, the minimum quantum seed that can generate $P$ can always be chosen as a pure quantum state~\cite{sikora2016minimum}. Then we have the following conclusion, which shows that noiseless quantum protocols enjoy a remarkable quantum advantage in generating classical correlations.

\begin{theorem}\label{thm:PureEnt_advantage}
Suppose $\rho=\ketbra{\psi}{\psi}$ is a bipartite entangled pure quantum state on $\mc{H}_A\otimes\mc{H}_B$.
Then, there always exists a family of classical correlation $\{P_m\}_{m\in\mbb{Z}^+}$ such that $\rho\xrightarrow{0}P_m$ for all $m$, and  $\underset{m\rightarrow\infty}{\operatorname{lim}}\mathcal{S}_{0}(P_m)=\infty$.
\end{theorem}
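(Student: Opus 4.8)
The plan is to exhibit, for the given entangled $\rho$, a family of classical correlations that $\rho$ can generate whose positive semidefinite rank stays bounded by a constant while the nonnegative rank diverges; the ratio characterization of $\mathcal{S}_0$ then finishes the argument. The concrete candidate is built from the squared-distance matrix: for each $m\in\mbb{Z}^+$ let $M_m\in\mbb{R}_{\ge0}^{m\times m}$ have entries $M_m(x,y)=(x-y)^2$, and let $P_m$ be $M_m$ rescaled to a probability distribution and augmented with a single extra outcome as described below.

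First I would record the two rank facts about $M_m$. One has $\operatorname{rank}_{\mathrm{psd}}(M_m)=2$: with the rank-one PSD matrices $C_x=\begin{pmatrix}1\\ x\end{pmatrix}\begin{pmatrix}1 & x\end{pmatrix}$ and $D_y=\begin{pmatrix}y\\ -1\end{pmatrix}\begin{pmatrix}y & -1\end{pmatrix}$ one checks $\tr(C_xD_y)=(x-y)^2$, so the PSD rank is at most $2$; and it is not $1$, since a nonnegative rank-one matrix that has a zero entry has an all-zero row or column, which $M_m$ does not. One also has $\operatorname{rank}_+(M_m)\to\infty$: it is known that the squared-distance matrix of $m$ equally spaced collinear points has nonnegative rank of order $\log m$ — this is precisely the kind of linear-versus-semidefinite separation recorded in \cite{Samuel2012linear} — and in particular it is unbounded. (Any family with $O(1)$ PSD rank and unbounded nonnegative rank would do equally well.)

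Next I would verify $\rho\xrightarrow{0}P_m$. Since $\rho=\ketbra{\psi}{\psi}$ is entangled, its Schmidt decomposition $\ket{\psi}=\sum_i\sqrt{\lambda_i}\ket{a_i}\ket{b_i}$ has at least two strictly positive coefficients, and for local operators $A,B$ one has $\tr((A\otimes B)\ketbra{\psi}{\psi})=\tr(\sqrt{\Lambda}\,A\,\sqrt{\Lambda}\,B^{T})$ with $\Lambda=\operatorname{diag}(\lambda_i)$. Embedding the two-dimensional factorization $\{C_x\},\{D_y\}$ from above into the block of $\rho$ where $\lambda_1,\lambda_2>0$ and taking $A_x\propto\Lambda^{-1/2}C_x\Lambda^{-1/2}$ and $B_y\propto D_y^{T}$ reproduces $M_m$ up to an overall constant. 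The only remaining point is to turn $\{A_x\}$ and $\{B_y\}$ into genuine POVMs, i.e.\ to make them sum to the identity; this is the standard normalization step — one rescales $M_m$ by a small $t>0$ so that the relevant partial sums are dominated by $\Lambda$ and $I$, then appends a single product ``dummy'' outcome carrying the leftover probability $1-t$, whose POVM elements absorb the slack. The resulting correlation $P_m$ satisfies $\rho\xrightarrow{0}P_m$; appending one product outcome adds at most a rank-one term, so $\operatorname{rank}_{\mathrm{psd}}(P_m)\le 3$, and it cannot decrease the nonnegative rank, so $\operatorname{rank}_+(P_m)\ge\operatorname{rank}_+(M_m)$ (a submatrix never has larger nonnegative rank).

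Finally, using $\mathrm{R}(P)=\lceil\log_2\operatorname{rank}_+(P)\rceil$ and the fact that in the noiseless case $\inf_\rho\operatorname{size}(\rho)=\mathrm{Q}(P)=\lceil\log_2\operatorname{rank}_{\mathrm{psd}}(P)\rceil$,
\[
\mathcal{S}_0(P_m)=\frac{\lceil\log_2\operatorname{rank}_+(P_m)\rceil}{\lceil\log_2\operatorname{rank}_{\mathrm{psd}}(P_m)\rceil}\ \ge\ \frac{\lceil\log_2\operatorname{rank}_+(M_m)\rceil}{2}\ \longrightarrow\ \infty,
\]
since the denominator never exceeds $\lceil\log_2 3\rceil=2$ while the numerator grows like $\log\log m$. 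I expect the main obstacle to lie not in this last bookkeeping but in (a) nailing down a clean reference or short self-contained proof that $\operatorname{rank}_+(M_m)$ is unbounded — the sole place where one genuinely lower-bounds the classical cost — and (b) carrying out the POVM-normalization argument carefully enough that $\rho\xrightarrow{0}P_m$ holds for an arbitrary, possibly very skewed, entangled pure state rather than merely a maximally entangled one.
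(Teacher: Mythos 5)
Your proposal is correct in spirit and reaches the theorem, but it takes a genuinely different route from the paper, and the two self-diagnosed obstacles are real gaps that need filling. The paper also uses a squared-distance (Euclidean distance) matrix as the engine, but it takes $\alpha_1,\dots,\alpha_m$ chosen according to Shitov's construction, yielding the stronger lower bound $\operatorname{rank}_+(\mathrm{EDM}_m)\ge 2\sqrt m-2$, and then handles the POVM constraint \emph{exactly} rather than approximately: it pre-scales the matrix to $\mathrm{EDM}_m'=L\cdot\mathrm{EDM}_m\cdot R$ with a one-parameter family of diagonal $L,R$ chosen so that the equivalent diagonal PSD factorization has row/column sums $\operatorname{diag}(\sqrt{\mu_1},\sqrt{\mu_2})$ matching the Schmidt coefficients of the $2$-dimensional projection of $\ket\psi$, then invokes the characterization in chen2024generation to obtain bona fide POVMs whose cross-terms with the complementary projectors vanish, producing a clean block-diagonal $P_m$. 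Your equally-spaced choice and the ``rescale by small $t$ and add a dummy outcome'' trick do also work --- rescaling $A_x\propto\Lambda_2^{-1/2}C_x\Lambda_2^{-1/2}$ on the two-dimensional Schmidt block, choosing $t$ small enough that $I-t\sum_x A_x\succeq0$, and letting the leftover identity deficit be the extra POVM element produces a valid $(m+1)\times(m+1)$ correlation whose top-left $m\times m$ block is proportional to $M_m$; since nonnegative rank only increases when passing from a submatrix to the full matrix, this suffices. For gap (a), the clean self-contained argument is the standard rectangle-covering (zero-pattern) bound: any nonnegative factorization of $M_m$ uses rank-one terms whose row- and column-supports must be disjoint (because the diagonal of $M_m$ is zero), and covering $\{(i,j):i\neq j\}$ with such rectangles forces at least $\log_2 m$ of them, giving $\operatorname{rank}_+(M_m)\ge\log_2 m$; this is weaker than Shitov's $2\sqrt m-2$ but still yields $\mathcal{S}_0(P_m)=\Omega(\log\log m)\to\infty$. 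Note the attribution to Gouveia--Parrilo--Thomas is slightly off --- that reference gives the psd-vs-nonnegative separation framework but not the specific lower bound for the equally-spaced distance matrix; either cite the rectangle-covering bound as folklore or prove it inline. The paper's version buys a stronger rate ($\Omega(\log m)$ vs.\ your $\Omega(\log\log m)$) and a tidier block-diagonal $P_m$, while yours is more elementary (no Shitov, no chen2024generation) at the cost of messier cross-terms and a weaker quantitative conclusion. Both are legitimate.
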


The proof for Theorem~\ref{thm:PureEnt_advantage} can be seen in Appendix~\ref{app:proof_of_PureEnt}. Here we sketch the main idea for the proof. For $m$ distinct real numbers $\alpha_1,\cdots,\alpha_m$, we define a so-called $m\times m$ Euclidean distance matrix $\mathrm{EDM}_m$ by $\mathrm{EDM}_m(x,y)=\frac{(\alpha_x-\alpha_y)^2}{\sum_{ij}(\alpha_i-\alpha_j)^2}$. It has been known that $\operatorname{rank}_{\mathrm{psd}}(\mathrm{EDM}_m)=2$ and $\operatorname{rank}_{+}(\mathrm{EDM}_m)\ge2\sqrt{m}-2$~\cite{Shitov2019Euclidean}. Then we call $\mathrm{EDM}_m'\equiv L\cdot\mathrm{EDM}_m\cdot R$ a modified Euclidean distance matrix, where $L$ and $R$ are two $m\times m$ diagonal matrices with positive diagonal entries. We can prove that for any bipartite entangled pure state $\ket{\psi}$ and any integer $m$, $\ket{\psi}$ can always generate a class correlation $P_m$ such that some submatrix of $P_m$ is an $m\times m$ modified Euclidean distance matrix $\mathrm{EDM}_m'$. Then it can be seen
$\mathcal{S}_{0}(P_m)\ge\lceil\log_2(2\sqrt{m}-2)\rceil/\operatorname{size}(\rho)=\Omega(\log(m))$. Since $m$ can be chosen arbitrarily, the proof is completed.

\section{The Impact of Noise on Quantum Advantage}

We now move to the case of noisy quantum protocols that generate classical correlations, and try to find out how noise affects quantum advantage. Recall that the noise model we choose is that the global depolarizing noise channel $\E_\lambda(\rho)=(1-\lambda)\rho+\lambda \tr(\rho)I_d/d$ acts independently on the subsystems of Alice and Bob respectively, which transfers a noiseless seed state $\sigma$ to $\E_\lambda\otimes\E_\lambda(\sigma)$.

\subsection{The case of strong noise: the destruction of reachability}

Let us first consider the scenario of strong quantum noise. It turns out that for a classical correlation $P$, if the strength of quantum noise is higher than a certain value of $\lambda$, which we call the threshold, any quantum protocols will fail in generating $P$. Denote $\norm{\mathbf{s}}_1=\sum_{x=1}^n\abs{ s_x}$ as the one-norm of a vector $\mathbf{s}=( s_1,\cdots, s_n)$. For a Hilbert space $\mc{H}$, denote $D(\mc{H})$ be the set of density operators on $\mc{H}$. The following proposition can be utilized to identify this threshold. Note that if $\sigma\xrightarrow{\lambda}P$ for some $\lambda>0$, then every entry of $P$ is positive. Thus we always assume $P\in\mbb{R}_{>0}^{n\times n}$ below.
\begin{proposition}\label{reachable}
Suppose $P\in\mbb{R}^{n\times n}_{>0}$ is a correlation and $\lambda<1$ is a noise strength. Then there exists a dimension $d$ and a quantum state $\sigma\in D(\mbb{C}^{d}\otimes\mbb{C}^{d})$  making $\sigma\xrightarrow{\lambda}P$, if and only if there exists $\mathbf{s}=( s_1,\cdots, s_n),\mathbf{t}=( t_1,\cdots, t_n)\in\mbb{R}_{>0}^n$ with $\norm{\mathbf{s}}_1=\norm{\mathbf{t}}_1=1$ such that
\begin{align}\label{DefOfPLambmaHat}
    \hat{P}_\lambda^{\mathbf{s},\mathbf{t}}(x,y) \equiv P(x,y)-\lambda  s_x\sum_aP(a,y)-\lambda  t_y\sum_bP(x,b)+\lambda^2 s_x t_y\ge0\
\end{align}
holds for all $x,y$.
\end{proposition}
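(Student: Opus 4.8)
The plan is to convert the relation $\sigma\xrightarrow{\lambda}P$ into a single explicit linear identity and then read off both implications from it.

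\emph{Forward direction.} Suppose a seed $\sigma\in D(\mbb{C}^d\otimes\mbb{C}^d)$ together with local POVMs $\{M_x\}_{x\in[n]}$, $\{N_y\}_{y\in[n]}$ generates $P$, i.e. $P(x,y)=\tr[(M_x\otimes N_y)\,\E_\lambda\otimes\E_\lambda(\sigma)]$. First I would expand
\[
\E_\lambda\otimes\E_\lambda(\sigma)=(1-\lambda)^2\sigma+\lambda(1-\lambda)\Big(\sigma_A\otimes\tfrac{I_d}{d}+\tfrac{I_d}{d}\otimes\sigma_B\Big)+\lambda^2\,\tfrac{I_d}{d}\otimes\tfrac{I_d}{d},
\]
where $\sigma_A,\sigma_B$ are the two reduced states of $\sigma$, and set $s_x:=\tr(M_x)/d$, $t_y:=\tr(N_y)/d$. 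Since every entry of $P$ is positive, no $M_x$ or $N_y$ can vanish, so $\mathbf{s},\mathbf{t}\in\mbb{R}_{>0}^n$, and $\sum_xM_x=\sum_yN_y=I_d$ forces $\norm{\mathbf{s}}_1=\norm{\mathbf{t}}_1=1$. Plugging the expansion in gives
\[
P(x,y)=(1-\lambda)^2\tr[(M_x\otimes N_y)\sigma]+\lambda(1-\lambda)\big(t_y\tr[M_x\sigma_A]+s_x\tr[N_y\sigma_B]\big)+\lambda^2 s_x t_y .
\]
Summing over $x$ (respectively over $y$) and using $\sum_xM_x=I_d$ (respectively $\sum_yN_y=I_d$), and dividing by $1-\lambda>0$, yields $\tr[M_x\sigma_A]=\big(\sum_bP(x,b)-\lambda s_x\big)/(1-\lambda)$ and $\tr[N_y\sigma_B]=\big(\sum_aP(a,y)-\lambda t_y\big)/(1-\lambda)$. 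Substituting these back collapses the right-hand side to exactly $\hat{P}_\lambda^{\mathbf{s},\mathbf{t}}(x,y)=(1-\lambda)^2\tr[(M_x\otimes N_y)\sigma]$, which is $\ge0$ because $M_x\otimes N_y\succeq0$ and $\sigma\succeq0$. This proves the ``only if'' part.

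\emph{Converse.} Given $\mathbf{s},\mathbf{t}\in\mbb{R}_{>0}^n$ with $\norm{\mathbf{s}}_1=\norm{\mathbf{t}}_1=1$ and $\hat{P}_\lambda^{\mathbf{s},\mathbf{t}}\ge0$ entrywise, set $Q:=\hat{P}_\lambda^{\mathbf{s},\mathbf{t}}/(1-\lambda)^2$; a short computation using $\norm{\mathbf{s}}_1=\norm{\mathbf{t}}_1=1$ shows $\sum_{x,y}\hat{P}_\lambda^{\mathbf{s},\mathbf{t}}(x,y)=(1-\lambda)^2$, so $Q$ is a genuine correlation. I would then build a noiseless protocol that reproduces $Q$ while forcing the prescribed marginal traces: take $d=n+K$ with $K$ large enough that $s_xd\ge1$ and $t_yd\ge1$ for all $x,y$, let $\sigma=\sum_{x_0,y_0}Q(x_0,y_0)\,\ketbra{x_0}{x_0}\otimes\ketbra{y_0}{y_0}$ (supported on the first $n$ coordinates of each factor), and put $M_x=\ketbra{x}{x}+\sum_{i=1}^{K}g^{(i)}_x\ketbra{n+i}{n+i}$, where $(g^{(i)}_x)_{i\in[K],\,x\in[n]}$ is any nonnegative array with $\sum_xg^{(i)}_x=1$ for each $i$ and $\sum_ig^{(i)}_x=s_xd-1$ for each $x$ --- such an array exists because both families of prescribed sums total $K$ --- and define $N_y$ analogously. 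One checks that $\{M_x\},\{N_y\}$ are valid POVMs with $\tr(M_x)=s_xd$, $\tr(N_y)=t_yd$, and that $\tr[(M_x\otimes N_y)\sigma]=Q(x,y)$, $\tr[M_x\sigma_A]=\sum_yQ(x,y)$, $\tr[N_y\sigma_B]=\sum_xQ(x,y)$. Running the computation of the forward direction in reverse then gives $\tr[(M_x\otimes N_y)\,\E_\lambda\otimes\E_\lambda(\sigma)]=P(x,y)$, i.e. $\sigma\xrightarrow{\lambda}P$.

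The two points where I expect to need the most care, rather than a single hard obstacle, are: (i) verifying that $Q$ is an honest probability distribution --- this is precisely where the normalizations $\norm{\mathbf{s}}_1=\norm{\mathbf{t}}_1=1$ and the hypothesis $\lambda<1$ get used (the latter also to invert $1-\lambda$ in the marginal relations); and (ii) in the converse, the fact that the required POVM-element traces $s_xd$ are generally unequal and, for irrational $\mathbf{s}$, not even integers, so plain orthogonal projectors cannot realize them --- the $K$ ancillary coordinates carrying the weights $g^{(i)}_x$ are exactly the gadget that absorbs these trace constraints without changing the correlation that is produced.
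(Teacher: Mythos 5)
Your forward direction is essentially the same as the paper's: expand $\E_\lambda\otimes\E_\lambda(\sigma)$, set $s_x=\tr(M_x)/d$, $t_y=\tr(N_y)/d$, substitute the marginal relations, and read off $\hat{P}_\lambda^{\mathbf{s},\mathbf{t}}(x,y)=(1-\lambda)^2\tr[(M_x\otimes N_y)\sigma]\ge0$. Your converse, however, is genuinely different from the paper's, and it is correct. The paper first invokes the characterization of minimal quantum seeds from Ref.~[Jain et al.] to realize the normalized correlation $Q=\hat{P}_\lambda^{\mathbf{s},\mathbf{t}}/(1-\lambda)^2$ with a seed state $\sigma'$ of local dimension $d'=\operatorname{rank}_{\mathrm{psd}}(\hat{P}_\lambda^{\mathbf{s},\mathbf{t}})$, then tensors an ancilla $\mbb{C}^k$ on each side and smears the extra POVM weight over $I_k-\ketbra{0}{0}$ to force $\tr(E_x)=ds_x$, $\tr(F_y)=dt_y$; you instead take a purely classical (diagonal) seed $\sigma=\sum_{x_0,y_0}Q(x_0,y_0)\ketbra{x_0}{x_0}\otimes\ketbra{y_0}{y_0}$ on $\mbb{C}^{n+K}\otimes\mbb{C}^{n+K}$ and absorb the trace constraints into a nonnegative array $(g^{(i)}_x)$ with prescribed row and column sums, whose existence is a one-line transportation-polytope fact (indeed the uniform choice $g^{(i)}_x=(s_xd-1)/K$ already works). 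Your route is more elementary and self-contained, since it does not rely on the PSD-rank realization theorem; the paper's construction is preferred there because it simultaneously delivers the quantitative upper bound $\mathrm{C}_\lambda(P)\le\operatorname{rank}_{\mathrm{psd}}(\hat{P}_\lambda^{\mathbf{s},\mathbf{t}})\lceil 1/\min_{x,y}\{s_x,t_y\}\rceil$ of Eq.~(5), which is needed in the later cost analysis, whereas your diagonal seed only gives a dimension bound of order $n+\max\{1/\min_x s_x,\,1/\min_y t_y\}$. For the bare reachability statement of Proposition~1, both arguments are complete.
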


The proof for Proposition~\ref{reachable} can be found in Appendix~\ref{app:proof_reachable}.
We sketch the main idea of the proof here.
If there exist POVMs $\{E_x\}$ and $\{F_y\}$ such that
\begin{align*}
P(x,y)=&\tr(E_x\otimes F_y\E_\lambda\otimes\E_\lambda(\sigma))\\
=&(1-\lambda)^2\tr(E_x\otimes F_y\sigma)+\lambda(1-\lambda)\frac{\tr(E_x)}{d}\tr(F_y\sigma_B)\\
&+\lambda(1-\lambda)\tr(E_x\sigma_A)\frac{\tr(F_y)}{d}+\lambda^2\frac{\tr(E_x)}{d}\frac{\tr(F_y)}{d}.
\end{align*}
Take $ s_x=\tr(E_x)/d$ and $ t_y=\tr(F_y)/d$ and rearrange the terms, we can see that $\hat{P}_\lambda^{\mathbf{s},\mathbf{t}}(x,y)=(1-\lambda)^2\tr(E_x\otimes F_y\sigma)\ge0$.
On the other hand, if Eq.\eqref{DefOfPLambmaHat} holds, we can first find a noiseless quantum protocol with a seed state $\sigma'$, and POVMs $\{E_x'\}$ and $\{F_y'\}$ to generate the correlation $\frac{1}{(1-\lambda)^2}\hat{P}_\lambda^{\mathbf{s},\mathbf{t}}$. Then by enlarging the quantum systems of Alice and Bob to allow more freedom, we can construct a new quantum protocol that can generate $P$ exactly under the noise.

We now show how Proposition~\ref{reachable} allows us to characterize the region of $\lambda$ that a given classical correlation $P\in\mbb{R}_{>0}^{n\times n}$ is reachable for quantum protocols. For this, we denote $\Lambda(P)$ as this region, i.e.,
\begin{align}
\Lambda(P)=\left\{\lambda:\text{ there exists a quantum state } \rho \text{ such that }\rho\xrightarrow{\lambda}P\right\}.
\end{align}
For any $\lambda\in \Lambda(P)$, according to Proposition \ref{reachable} there exists $\mathbf{s},\mathbf{t}\in\mbb{R}_{>0}^n$ with $\norm{\mathbf{s}}_1=\norm{\mathbf{t}}_1=1$,  such that
$\hat{P}_\lambda^{\mathbf{s},\mathbf{t}}$ defined in Eq.\eqref{DefOfPLambmaHat} is a nonnegative matrix.
Note that
\begin{equation}\label{eq:derivative}
\begin{aligned}
\frac{\partial \hat{P}_\lambda^{\mathbf{s},\mathbf{t}}(x,y)}{\partial\lambda}&=2\lambda  s_x t_y- s_x\sum_aP(a,y)- t_y\sum_bP(x,b)\\
    &=-\frac{1}{1-\lambda}\left( s_x\sum_a \hat{P}_\lambda^{\mathbf{s},\mathbf{t}}(a,y)+ t_y\sum_b \hat{P}_\lambda^{\mathbf{s},\mathbf{t}}(x,b)\right)\le 0.
\end{aligned}
\end{equation}
Thus for any $0\le\lambda'\le\lambda$ we have $ \hat{P}_{\lambda'}^{\mathbf{s},\mathbf{t}}(x,y)\ge \hat{P}_\lambda^{\mathbf{s},\mathbf{t}}(x,y)\ge0$, which implies $\lambda'\in \Lambda(P)$. Therefore, $\Lambda(P)$ is an interval. That is, the continuous increase of $\lambda$ will destroy the reachability of quantum protocols for $P$, which will never come back.

Proposition~\ref{reachable} fully characterizes the reachability of noisy quantum protocols in generating correlations. However, for a given $P$, finding out whether suitable $\mathbf{s}$ and $\mathbf{t}$ can be found to satisfy the conditions in the Proposition~\ref{reachable} is very challenging, making it difficult to determine whether a correlated $P$ can be generated quantumly under the noise strength $\lambda$.
Because of this, below we provide an easy-to-handle necessary condition for the reachability of noisy quantum protocols. Specifically, for a given classical correlation $P\in\mbb{R}^{n\times n}_{\ge0}$, there exists a quantum state $\sigma\in D(\mbb{C}^{d}\otimes\mbb{C}^{d})$ satisfying $\sigma\xrightarrow{\lambda}P$, only if
\begin{align}\label{eq:reachabilitylemma}
    0\le\lambda\le1-\max_{\varphi\in S_n}\sum_{x=1}^n\sqrt{\max\left\{0,\sum_{b=1}^nP(x,b)\sum_{a=1}^nP(a,\varphi(x))-P(x,\varphi(x))\right\}},
\end{align}
where $S_n$ is a symmetric group of degree $n$. The proof can be found in Appendix~\ref{appendix_for_reachabilitylemma}.

Note that $\sum_{xy}\big(\sum_bP(x,b)\sum_{a}P(a,y)-P(x,y)\big)=0$. Then if $\operatorname{rank}(P)\ge2$, there always exist $x_0$ and $y_0$ such that
\begin{align*}
\sum_bP(x_0,b)\sum_{a}P(a,y_0)-P(x_0,y_0)>0.
\end{align*}
Thus the upper bound is always strictly less than $1$. That is to say, for any $P$ with $\operatorname{rank}(P)\ge2$, reachability will always be destroyed before $\lambda$ reaches $1$.

\subsection{The case of weak noise: the costs of noisy quantum protocols}
\label{subsection:cost_of_noisy_quantum_protocols}

For a given nontrivial classical correlation $P\in\mathbb{R}_{\ge0}^{n\times n}$ whose $\textup{rank}\geq2$,
we now focus on the interval $\Lambda(P)$, that is, the region of $\lambda$ that $P$ is reachable for noisy quantum protocols. In such a situation, the most important problem is how the cost of generating $P$ increases as the noise strength grows.

We denote $\mathrm{C}_{\lambda}(P)$ as the minimal local dimension for the seed states that can generate $P$ under the noise channel $\E_\lambda\otimes\E_\lambda$, i.e., the smallest number $d$ such that there exists $\sigma\in D(\mathbb{C}^d\otimes\mathbb{C}^d)$ satisfying $\sigma\xrightarrow{\lambda}P$. Then the size of the minimum seed state can be written as $\lceil\log_2\mathrm{C}_{\lambda}(P)\rceil$.

Recall that the cost of the optimal noiseless quantum protocol generating $P$ can be fully characterized as
\begin{align*}
\mathrm{Q}(P)&=\left\lceil\log_2 \operatorname{rank}_{\mathrm{psd}}(P)\right\rceil,
\end{align*}
where $\operatorname{rank}_{\mathrm{psd}}(P)$ is the PSD-rank of $P$~\cite{Jain2013efficient}. Inspired by the result above, we connect the cost of noisy quantum protocols in generating correlations to a modified version of  PSD-rank. Specifically, we define $\operatorname{rank}_{\textup{psd}}^{\E_\lambda}(P)$ to be the smallest $r$  for which there exist $r\times r$ PSD matrices $\{C_x\},\{D_y\}$ such that $P(x,y)=\tr(C_xD_y)$, satisfying that both
    \begin{align*}
        C_x-\frac{\lambda}{r}\tr(C_x\left(\sum_{k=1}^nC_k\right)^{-1})\sum_{k=1}^nC_k
    \end{align*}
    and
    \begin{align*}
        D_y-\frac{\lambda}{r}\tr(D_y\left(\sum_{k=1}^nD_k\right)^{-1})\sum_{k=1}^nD_k
    \end{align*}
    are PSD matrices. In this context, we say $\{C_i\},\{D_j\}$ is an $\E_\lambda$-PSD factorization of $P$.
\noindent Then we have that
\begin{align}\label{eq:cost_leq_ranknoisy}
\mathrm{C}_{\lambda}(P)\le\operatorname{rank}_{\textup{psd}}^{\E_\lambda}(P),
\end{align}
which will be proved in Appendix~\ref{app:new_psd_rank}. Particularly, if we restrict seed states to be pure states, the above inequality holds with equality.

Meanwhile, it turns out that Proposition~\ref{reachable} also helps to bound $\mathrm{C}_\lambda(P)$, which allows us to obtain the following upper and lower bounds for $\mathrm{C}_{\lambda}(P)$.
For a given classical correlation $P\in\mbb{R}_{>0}^{n\times n}$ and $\lambda\ge0$, if there exists $\mathbf{s},\mathbf{t}\in\mbb{R}_{>0}^n$ with $\norm{\mathbf{s}}_1=\norm{\mathbf{t}}_1=1$ such that $\hat{P}_\lambda^{\mathbf{s},\mathbf{t}}(x,y)$ defined in Eq.\eqref{DefOfPLambmaHat} is nonnegative for all $x,y$, then
\begin{align}\label{eq:upper_bound}
\mathrm{C}_{\lambda}(P)&\le\rank_\textup{psd}(\hat{P}_\lambda^{\mathbf{s},\mathbf{t}})\left\lceil\frac{1}{\min_{x,y}\{ s_x, t_y\}}\right\rceil,
\end{align}
and
\begin{align}\label{eq:lower_bound}
\mathrm{C}_{\lambda}(P)&\ge\frac{1}{1-\lambda}\left(\inf_{\mathbf{s}',\mathbf{t}'}\max_{x,y}\left\{\frac{\sum_bP(x,b)}{ s'_x},\frac{\sum_aP(a,y)}{ t'_y}\right\}-\lambda\right),
\end{align}
where $\mathbf{s}',\mathbf{t}'\in\mbb{R}_{>0}^n$ with $\norm{\mathbf{s}'}_1=\norm{\mathbf{t}'}_1=1$ satifying that $\hat{P}_\lambda^{\mathbf{s}',\mathbf{t}'}(x,y)\ge0$. The proofs for these bounds can be found in Appendix~\ref{app:proof_upper_lower_bound}.

\subsection{The decay process of quantum advantage}

Now we are ready to analyze how quantum advantage is affected when the strength of noise continuously grows up. We will demonstrate our techniques using specific examples.

Define $A_m\in\mathbb{R}^{(m+1)\times(m+1)}$ to be the correlation with the entries given by
    \begin{align}\label{P_example_decay}
        A_m = \left(\begin{array}{cccc}
        \frac{(1-q)^2}{2}&\frac{q(1-q)}{2m}&\cdots&\frac{q(1-q)}{2m}\\
        \frac{q(1-q)}{2m}&&&\\
        \vdots&&\frac{1+q^2}{2}B_m&\\
        \frac{q(1-q)}{2m}&&&
        \end{array}\right),
    \end{align}
    where $q=\frac{1}{1-k}-\sqrt{\frac{1}{(1-k)^2}-1}$, $k$ is a parameter, and $B_m$ is an $m\times m$ correlation with the entries being
    \begin{align*}
        B_m(x,y)=\frac{1}{m^2}\left(1-k\cos(2\pi\frac{x-1+y-1}{m})\right)
    \end{align*}
    for $x,y\in[m]$. Note that if $0<k<1$, we have $0<q<1$. In fact, the correlation $B_m$, as a matrix, is the slack matrix of two concentric regular polygons, as shown in Fig. \ref{fig:polygon}. More details about this geometric interpretation can be found in~\cite{fawzi2015positive,gillis2012geometric}, and some mathematical properties of $A_m$ and $B_m$ can be seen in Appendix~\ref{app:properties_of_B}.

\begin{figure}
\centering
\includegraphics[width=0.44\textwidth]{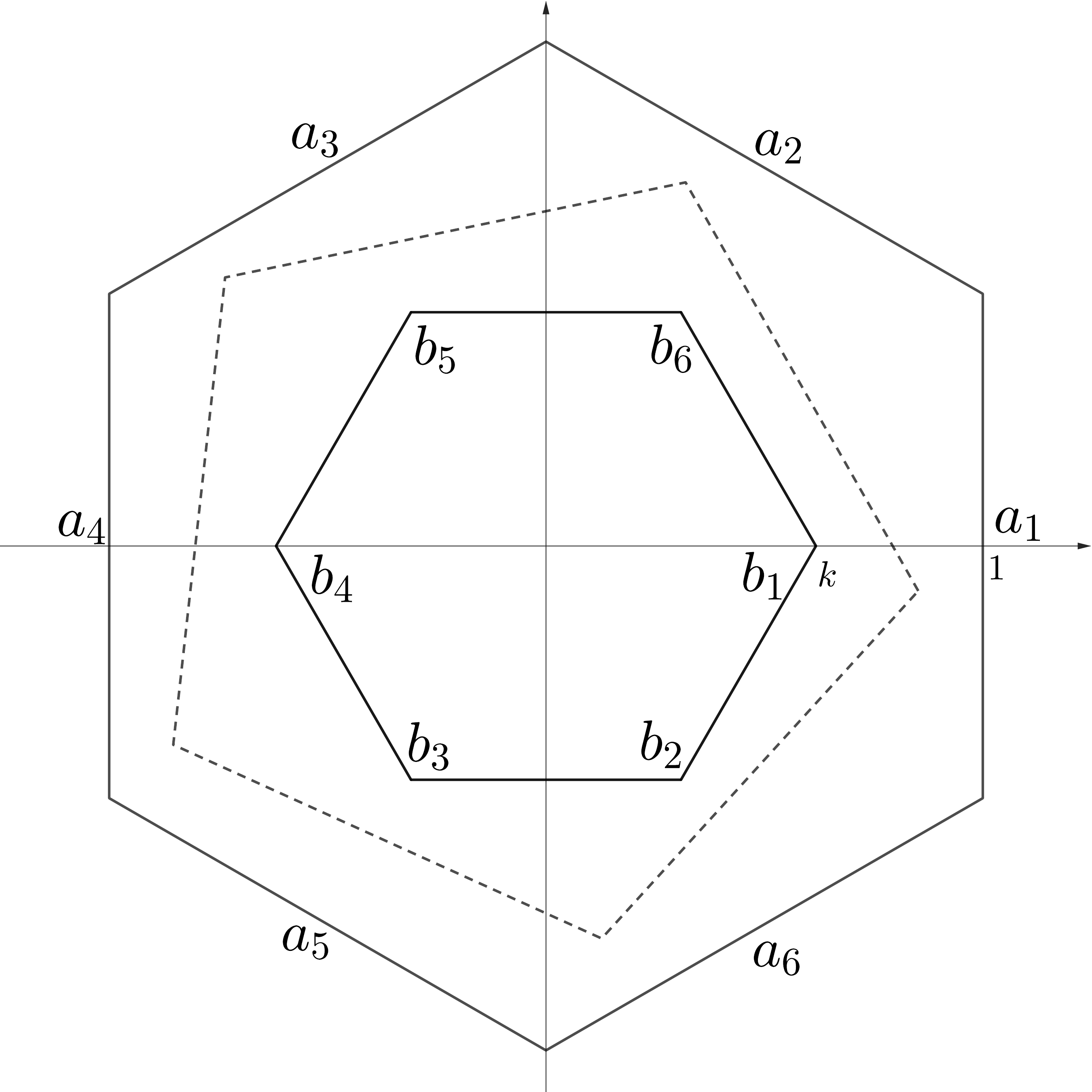}
\caption{A geometric interpretation for $B_m$, where $m=6$. The vertices of the smaller solid-line polygon are labeled clockwise from $1$ to $6$, while the edges of the bigger solid-line polygon are labeled counter-clockwise from $1$ to $6$. The dash-line pentagon shows a geometric interpretation for the restricted nonnegative rank of $B_m$.}
\label{fig:polygon}
\end{figure}

    It can be shown that in the absence of noise, $A_m$ has an arbitrarily large quantum advantage when $m\rightarrow\infty$.
    In fact, in Appendix~\ref{app:properties_of_B} we prove that if $k>\frac{\cos(2\pi/m)}{\cos^2(\pi/m)}$, then $\operatorname{rank}_+(A_m)> \log_2(m/2)$, whereas $\operatorname{rank}_{\textup{psd}}(A_m)\le 3$. As a result, it indicates that
    \begin{align}
        \underset{m\rightarrow\infty}{\lim}\mathcal{S}_{0}(A_m)=\underset{m\rightarrow\infty}{\lim}\frac{\mathrm{R}(A_m)}{\operatorname{rank}_{\textup{psd}}(A_m)}=\infty.
    \end{align}

We then turn to noisy quantum protocols that generate $A_m$. Recall that $\lceil\log_2\mathrm{C}_{\lambda}(A_m)\rceil$ is the smallest size of a seed state that can generate $A_m$ under the noise channel $\E_\lambda\otimes\E_\lambda$, which can be upper and lower bounded by Eq.\eqref{eq:upper_bound} and Eq.\eqref{eq:lower_bound}. As a result, we have the chance to characterize the decay process of quantum advantage in generating $A_m$ when $\lambda$ goes up.

\begin{theorem}\label{thm:A_decay_UB_LB}
As the noise strength $\lambda$ approaches $q$, the quantum advantage in generating $A_m$ decreases gradually to $0$.
\end{theorem}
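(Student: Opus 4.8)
The plan is to reduce the theorem to the growth rate of the minimal seed dimension $\mathrm{C}_\lambda(A_m)$, and then to pin that quantity down from both sides using Proposition~\ref{reachable} together with the bounds Eq.\eqref{eq:upper_bound} and Eq.\eqref{eq:lower_bound}. Since $\mathcal{S}_\lambda(A_m)=\mathrm{R}(A_m)/\lceil\log_2\mathrm{C}_\lambda(A_m)\rceil$ and $\mathrm{R}(A_m)=\lceil\log_2\operatorname{rank}_+(A_m)\rceil$ is a fixed finite number not depending on $\lambda$, it suffices to prove three things: (a) $A_m$ is reachable for every $\lambda\in[0,q)$, so that the limit $\lambda\to q$ is taken from inside $\Lambda(A_m)$ and $\mathrm{C}_\lambda(A_m)<\infty$ there; (b) $\mathrm{C}_\lambda(A_m)\to\infty$ as $\lambda\uparrow q$; and (c) this divergence is only logarithmic, so that $\lceil\log_2\mathrm{C}_\lambda(A_m)\rceil$ climbs through every sufficiently large integer as $\lambda\to q$. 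Part (b) already gives $\mathcal{S}_\lambda(A_m)\to0$; parts (a) and (c) are what make the decay \emph{gradual} in the precise sense that $\mathcal{S}_\lambda(A_m)$ stays strictly positive throughout $[0,q)$ and descends along a staircase rather than dropping discontinuously --- the contrast with the sudden-death phenomenon.

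The heart of the proof is (b): the plan is to show that the feasibility conditions of Proposition~\ref{reachable} collapse onto a single configuration as $\lambda\uparrow q$, forcing the ``corner'' marginals $s_1$ and $t_1$ to vanish. Using the marginals of $A_m$ ($\sum_b A_m(1,b)=\sum_a A_m(a,1)=\tfrac{1-q}{2}$, the remaining row/column sums equal to $\tfrac{1+q}{2m}$, and $A_m(1,y)=\tfrac{q(1-q)}{2m}$ for $y\ge2$), the entry condition $\hat A_{m,\lambda}^{\mathbf{s},\mathbf{t}}(1,y)\ge0$ rewrites, after completing a rectangular product, as $\big(s_1-\tfrac{1-q}{2\lambda}\big)\big(t_y-\tfrac{1+q}{2m\lambda}\big)\ge\tfrac{(1-q)^2}{4m\lambda^2}$, and symmetrically for $\hat A_{m,\lambda}^{\mathbf{s},\mathbf{t}}(x,1)$. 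A short sign analysis (using $\norm{\mathbf{s}}_1=\norm{\mathbf{t}}_1=1$ and the minimal-entry constraint inside the $B_m$-block to exclude the other branch) shows both bracketed factors must be negative; writing $\alpha=\tfrac{1-q}{2\lambda}-s_1>0$ and $\beta=\tfrac{1-q}{2\lambda}-t_1>0$, summing the first family over the $m$ values of $y$, the second over $x$, and using the normalization once more, one obtains $\alpha+\tfrac{\kappa}{\beta}\le\tau$ and $\beta+\tfrac{\kappa}{\alpha}\le\tau$ with $\kappa=\tfrac{(1-q)^2}{4\lambda^2}$ and $\tau=\tfrac{1-\lambda}{\lambda}$. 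Adding these and using $x+\kappa/x\ge2\sqrt\kappa$ gives $4\sqrt\kappa\le2\tau$, i.e.\ $\lambda\le q$ (so $q$ is the reachability threshold), while the slack $2\tau-4\sqrt\kappa=\tfrac{2(q-\lambda)}{\lambda}$ must dominate $\tfrac{(\alpha-\sqrt\kappa)^2}{\alpha}+\tfrac{(\beta-\sqrt\kappa)^2}{\beta}$; since $\alpha,\beta\le\sqrt\kappa$ while $\alpha,\beta\ge\kappa/\tau$ are bounded away from $0$, this forces $s_1=\sqrt\kappa-\alpha=O(\sqrt{q-\lambda})$ and likewise $t_1=O(\sqrt{q-\lambda})$ for \emph{every} feasible pair. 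Feeding this into Eq.\eqref{eq:lower_bound} yields $\mathrm{C}_\lambda(A_m)\ge\tfrac{1}{1-\lambda}\big(\tfrac{1-q}{2s_1}-\lambda\big)=\Omega\big((q-\lambda)^{-1/2}\big)$, which is (b).

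For (a) and (c) the plan is constructive. For each $\lambda<q$ we take the symmetric pair $\mathbf{s}=\mathbf{t}$ with $s_1=\tfrac{1-q}{2\lambda}-\alpha_-$, where $\alpha_-$ is the smaller root of $\alpha^2-\tau\alpha+\kappa=0$, and the other coordinates all equal; one verifies directly that it satisfies every inequality in Eq.\eqref{DefOfPLambmaHat}, the first-row/column ones by construction and the $m\times m$ inner-block ones reducing, via the same completion of squares, to $(s_x-\tfrac{1+q}{2m\lambda})^2\ge\tfrac{\kappa}{m^2}(2\cos\theta_{xy}-1)$ for the relevant angles $\theta_{xy}$, which holds with slack. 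Then Eq.\eqref{eq:upper_bound} applies: the geometric description of $B_m$ as the slack matrix of two concentric regular polygons (the same structure responsible for $\operatorname{rank}_{\mathrm{psd}}(A_m)\le3$) keeps $\operatorname{rank}_{\mathrm{psd}}(\hat A_{m,\lambda}^{\mathbf{s},\mathbf{t}})$ bounded by an absolute constant, and $1/\min_{x,y}\{s_x,t_y\}=1/s_1=O((q-\lambda)^{-1/2})$, so $\mathrm{C}_\lambda(A_m)=O((q-\lambda)^{-1/2})$. Together with (b), $\lceil\log_2\mathrm{C}_\lambda(A_m)\rceil=\tfrac12\log_2\tfrac{1}{q-\lambda}+\Theta(1)$, which rises through every large integer as $\lambda\to q$ (this is (c)); combined with the interval property of $\Lambda(P)$ recorded after Eq.\eqref{eq:derivative}, we conclude that $\mathcal{S}_\lambda(A_m)$ is everywhere positive on $[0,q)$, equals $\mathrm{R}(A_m)/(\tfrac12\log_2\tfrac{1}{q-\lambda}+\Theta(1))$ near $q$, and hence decreases continuously to $0$ as $\lambda\to q$.

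We expect the main obstacle to be the collapse argument in (b): converting the \emph{pointwise} conditions of Proposition~\ref{reachable} into the \emph{global} statement that every admissible $(\mathbf{s},\mathbf{t})$ must have $\min(s_1,t_1)\to0$. A priori nothing prevents $s_1$ from staying $\Theta(1)$ while only $t_1$ shrinks, and ruling this out is exactly why the estimate is run through the symmetric pair $\alpha+\kappa/\beta\le\tau$, $\beta+\kappa/\alpha\le\tau$ and an AM--GM that controls $\alpha$ and $\beta$ simultaneously. It is the precise tuning of $A_m$ --- the identity $(1+q^2)(1-k)=2q$ together with the border value $A_m(1,y)=q(1-q)/(2m)$ --- that makes the two hyperbola thresholds $\tfrac{1-q}{2\lambda}$ and $\tfrac{1+q}{2m\lambda}$ line up so that the AM--GM is saturated exactly at $\lambda=q$; any coarser bookkeeping would either miss the threshold or fail to give the $O(\sqrt{q-\lambda})$ decay rate that ``gradual'' requires. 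The remaining and more routine task is checking that the explicit pair used in (a) meets all the entrywise constraints while keeping $\operatorname{rank}_{\mathrm{psd}}$ small, which is what upgrades ``$\mathcal{S}_\lambda(A_m)\to0$'' to ``$\mathcal{S}_\lambda(A_m)$ decreases gradually to $0$''.
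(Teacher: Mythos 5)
Your proof shares the paper's overall scaffolding (Proposition~\ref{reachable} plus Eq.~\eqref{eq:upper_bound}/\eqref{eq:lower_bound} to squeeze $\mathrm{C}_\lambda(A_m)$ from both sides and conclude $\mathcal{S}_\lambda=\Theta(1/\log\epsilon^{-1})$), but the way you extract the lower bound is genuinely different and, where it is correct, a bit stronger. The paper computes the single aggregate $\sum_{i\ge2}\hat P(i,1)+\sum_{j\ge2}\hat P(1,j)$, bounds it by $-2q^2 s_1 t_1+\epsilon(1+q)$, and deduces only $\min\{s_1,t_1\}\le\sqrt{\epsilon(1+q)/(2q^2)}$. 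You instead factor the first-row/column constraints into hyperbola form, sum each family separately into the two inequalities $\alpha+\kappa/\beta\le\tau$ and $\beta+\kappa/\alpha\le\tau$, and then apply AM--GM to force \emph{both} $\alpha,\beta\to\sqrt\kappa$, hence both $s_1,t_1=O(\sqrt{q-\lambda})$. That is a cleaner structural explanation of why $q$ is the threshold (the AM--GM saturating at $\lambda=q$) and it isolates the role of the identity $(1+q^2)(1-k)=2q$. The sign analysis, the derivation of $\alpha(\tau-\beta)\ge\kappa$ after summation, and the slack identity $2\tau-4\sqrt\kappa=2(q-\lambda)/\lambda$ all check out; and your reduction of the inner-block entries to $(s_x-\tfrac{1+q}{2m\lambda})^2\ge\tfrac{\kappa}{m^2}(2\cos\theta-1)$ is algebraically correct. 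Feeding $s_1=O(\sqrt\epsilon)$ into Eq.~\eqref{eq:lower_bound} gives the same $\Omega(\epsilon^{-1/2})$ the paper obtains.

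There is, however, one real gap in the upper-bound half. You assert that the slack-matrix / concentric-polygon structure of $B_m$ ``keeps $\operatorname{rank}_{\mathrm{psd}}(\hat A_{m,\lambda}^{\mathbf{s},\mathbf{t}})$ bounded by an absolute constant.'' That does not follow: $\hat P_\lambda^{\mathbf{s},\mathbf{t}}=P-\lambda\,\mathbf{s}\mathbf{c}^{\mathsf T}-\lambda\,\mathbf{r}\mathbf{t}^{\mathsf T}+\lambda^2\mathbf{s}\mathbf{t}^{\mathsf T}$ (with $\mathbf{r},\mathbf{c}$ the row/column sums) is a signed low-rank perturbation of $A_m$, and PSD rank is not stable under subtracting nonnegative rank-one pieces; the subadditivity $\operatorname{rank}_{\mathrm{psd}}(X+Y)\le\operatorname{rank}_{\mathrm{psd}}(X)+\operatorname{rank}_{\mathrm{psd}}(Y)$ only covers nonnegative summands. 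So you cannot claim $\operatorname{rank}_{\mathrm{psd}}(\hat A_{m,\lambda}^{\mathbf{s},\mathbf{t}})=O(1)$ without a separate argument. The paper side-steps this by just using the trivial $\operatorname{rank}_{\mathrm{psd}}\le m$; if you replace your unjustified claim by that trivial bound you still get $\mathrm{C}_\lambda(A_m)=O(m\,\epsilon^{-1/2})$, which is enough for $\mathcal{S}_\lambda\to0$ and in fact still sharper than the paper's $O(\epsilon^{-1})$ because your explicit $s_1=\sqrt\kappa-\alpha_-=\Theta(\sqrt\epsilon)$ is larger than the paper's $\eta=\Theta(\epsilon)$. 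A minor secondary caveat: your ``rises through every large integer'' phrasing is stronger than what monotonicity of the integer-valued $\mathrm{C}_\lambda$ gives you (it may skip integers), but that does not affect the conclusion that $\lceil\log_2\mathrm{C}_\lambda\rceil\to\infty$ and hence $\mathcal{S}_\lambda\to0$.
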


\begin{proof}

    Let $\lambda=q-\epsilon$,  where $0<\epsilon<q$. And still use the notation
    \begin{align*}
        \hat{P}_\lambda^{\mathbf{s},\mathbf{t}}(x,y) = A_m(x,y)-\lambda  s_x\sum_{a=1}^{m+1}A_m(a,y)-\lambda  t_y\sum_{b=1}^{m+1}A_m(x,b)+\lambda^2 s_x t_y.
    \end{align*}
    According to the Proposition \ref{reachable}, there exists $\rho$ such that $\rho\xrightarrow{q-\epsilon}P$ if and only if there exist $\mathbf{s},\mathbf{t}\in\mbb{R}_{>0}^{m+1}$ with $\norm{\mathbf{s}}_1=\norm{\mathbf{t}}_1=1$ such that $\hat{P}_\lambda^{\mathbf{s},\mathbf{t}}(x,y)\ge0$.

    First, we prove that for any $0<\lambda<q$, $A_m$ can always be generated by some quantum state. Let
    \begin{align*}
    \mathbf{s}=\mathbf{t}=\big(\eta,(1-\eta)/m,(1-\eta)/m,\cdots,(1-\eta)/m\big),
    \end{align*}
    where $\eta=\min\left\{\frac{1-q}{2(q-\epsilon)},\frac{\epsilon(1+q)}{2q(q-\epsilon)}\right\}$. Note that $B_m(x,y)\ge\frac{1-k}{m^2}$. We have
    \begin{align*}
        \hat{P}_\lambda^{\mathbf{s},\mathbf{t}}(1,1)\ge& \frac{(1-q)^2}{2}-(q-\epsilon)\eta(1-q)\ge0,\\
        \hat{P}_\lambda^{\mathbf{s},\mathbf{t}}(1,y)=&\hat{P}_\lambda^{\mathbf{s},\mathbf{t}}(x,1)\ge\frac{q(1-q)}{2m}-(q-\epsilon)\eta\frac{1+q}{2m}-(q-\epsilon)\frac{1-\eta}{m}\frac{1-q}{2}\ge0,\ 2\le x,y\le m+1,\\
        \hat{P}_\lambda^{\mathbf{s},\mathbf{t}}(x,y)\ge&\frac{1+q^2}{2}\frac{1-k}{m^2}-2(q-\epsilon)\frac{1-\eta}{m}\frac{1+q}{2m}+(q-\epsilon)^2\frac{(1-\eta)^2}{m^2}\\
        \ge&\frac{1+q^2}{2}\frac{1-k}{m^2}-2q\frac{1}{m}\frac{1+q}{2m}+q^2\frac{1}{m^2}\\
        =&\frac{1}{m^2}\left(\frac{(1+q^2)(1-k)}{2}-q\right)=0,\ 2\le x,y\le m+1.
    \end{align*}
    That is, when $0<\lambda<q$, $\hat{P}_\lambda^{\mathbf{s},\mathbf{t}}(x,y)\ge0$ for any $x,y$, which means there exists a quantum state $\rho$ such that $\rho\xrightarrow{\lambda} A_m$.

    By Eq.\eqref{eq:upper_bound}, if
    $\epsilon$ is small enough, we can obtain
    \begin{align*}
        \mathrm{C}_{\lambda}(A_m)&\le\rank_\textup{psd}(\hat{P}_\lambda^{\mathbf{s},\mathbf{t}})\left\lceil\frac{1}{\min_{x,y}\{ s_x, t_y\}}\right\rceil
        \le m\left\lceil\frac{2q(q-\epsilon)}{\epsilon(1+q)}\right\rceil=\mc{O}(\epsilon^{-1}).
    \end{align*}

    Second, we now show that when $\epsilon\rightarrow0$, for any $\mathbf{s},\mathbf{t}\in\mbb{R}_{>0}^{m+1}$ satisfying $\norm{\mathbf{s}}_1=\norm{\mathbf{t}}_1=1$ and $\hat{P}_\lambda^{\mathbf{s},\mathbf{t}}(x,y)\ge0$ for all $x,y$, we must have $\min\{ s_1, t_1\}\rightarrow0$. In fact, it holds that
    \begin{align*}
        &\sum_{i=2}^{m+1}\hat{P}_\lambda^{\mathbf{s},\mathbf{t}}(i,1)+\sum_{j=2}^{m+1}\hat{P}_\lambda^{\mathbf{s},\mathbf{t}}(1,j)\\
        =&\sum_{i=2}^{m+1}A_m(i,1)-\sum_{i=2}^{m+1}\lambda  s_i\sum_{a=1}^{m+1}A_m(a,1)-\sum_{i=2}^{m+1}\lambda  t_1\sum_{b=1}^{m+1}A_m(i,b)+\sum_{i=2}^{m+1}\lambda^2 s_i t_1\\
        &+\sum_{j=2}^{m+1}A_m(1,j)-\sum_{j=2}^{m+1}\lambda  s_1\sum_{a=1}^{m+1}A_m(a,j)-\sum_{j=2}^{m+1}\lambda  t_j\sum_{b=1}^{m+1}A_m(1,b)+\sum_{j=2}^{m+1}\lambda^2 s_1 t_j\\
        =&\frac{q(1-q)}{2}-\lambda(1- s_1)\frac{1-q}{2}-\lambda  t_1\frac{1+q}{2}+\lambda^2(1- s_1) t_1+\frac{q(1-q)}{2}-\lambda  s_1\frac{1+q}{2}-\lambda(1- t_1)\frac{1-q}{2}+\lambda^2 s_1(1- t_1)\\
        \le&\frac{q(1-q)}{2}-\lambda(1- s_1)\frac{1-q}{2}-\lambda  t_1\frac{1+q}{2}+q^2(1- s_1) t_1+\frac{q(1-q)}{2}-\lambda  s_1\frac{1+q}{2}-\lambda(1- t_1)\frac{1-q}{2}+q^2 s_1(1- t_1)\\
        =&-2q^2 s_1 t_1+\epsilon-\epsilon(1- s_1- t_1)q\\
        \le&-2q^2 s_1 t_1+\epsilon(1+q).
    \end{align*}
    Therefore, for any $\mathbf{s},\mathbf{t}\in\mbb{R}_{>0}^{m+1}$ and $\norm{\mathbf{s}}_1=\norm{\mathbf{t}}_1=1$, in order to let $\hat{P}_\lambda^{\mathbf{s},\mathbf{t}}(x,y)\ge0$, we must have
    \begin{align*}
        \min\{ s_1, t_1\}\le\sqrt{\frac{\epsilon(1+q)}{2q^2}}.
    \end{align*}
    Then according to Eq.\eqref{eq:lower_bound},
    \begin{align*}
    \mathrm{C}_{\lambda}(A_m)&\ge\frac{1}{1-\lambda}\left(\inf_{\mathbf{s},\mathbf{t}}\max_{x,y}\left\{\frac{\sum_bA_m(x,b)}{s_x},\frac{\sum_aA_m(a,y)}{t_y}\right\}-\lambda\right)\\
    &\ge\frac{1}{1-\lambda}\left(\inf_{\mathbf{s},\mathbf{t}}\max\left\{\frac{\sum_bA_m(1,b)}{s_1},\frac{\sum_aA_m(a,1)}{t_1}\right\}-\lambda\right)\\
    &\ge\frac{1}{1-q+\epsilon}\left(\frac{1-q}{2}\sqrt{\frac{2q^2}{\epsilon(1+q)}}-q+\epsilon\right)=\Omega(\epsilon^{-1/2}).
    \end{align*}

    In conclusion, when $\epsilon$ is small, we have that
    \begin{align*}
        \mathcal{S}_{\lambda}(A_m)=\frac{\mathrm{R}(A_m)}{\lceil\log_2\mathrm{C}_{\lambda}(A_m)\rceil}=\Theta\left(\frac{1}{\log\epsilon^{-1}}\right).
    \end{align*}
    In other words, when $\lambda\rightarrow q$, $\mathrm{C}_{\lambda}(A_m)\rightarrow\infty$, implying that $\mathcal{S}_{\lambda}(A_m)\rightarrow 0$.
    This means that when $\lambda$ approaches $q$ from the left, although $A_m$ is always reachable for noisy quantum protocols, their costs will continuously increase without bound, implying that the noise with a strength around $q$ will completely destroy the remarkable quantum advantage of the noiseless quantum protocol generating $A_m$.
\end{proof}

\subsection{The sudden death of quantum advantage}

We have shown that $\mathcal{S}_{\lambda}(A_m)$ decays gradually to $0$ as the noise rate $\lambda$ grows. Surprisingly, if we instead focus on $B_m$, a subcorrelation of $A_m$, we will observe an unexpected phenomenon: as $\lambda$ continuously increases from $0$ and exceeds a certain point, $\mathcal{S}_{\lambda}(B_m)$ suddenly drops to $0$ from a non-negligible value. We call this kind of phenomenon the sudden death of quantum advantage in correlation generation.

\begin{theorem}\label{thm:B_sudden_death}
Let $0<k<1$. For any integer $m$ and the classical correlation $B_m$ defined in Eq.\eqref{P_example_decay}, we have
\begin{enumerate}
    \item $\mathcal{S}_{\lambda}(B_m)=\Omega\left(\log\log(m)\right)$, \text{ if } $0\le\lambda\le1-\sqrt{k}$;
    \item $\mathcal{S}_{\lambda}(B_m)=0$, \text{ if } $\lambda>1-\sqrt{k}$.
\end{enumerate}
Therefore, the quantum advantage in generating $B_m$ experiences sudden death when the noise strength exceeds $\lambda=1-\sqrt{k}$.
\end{theorem}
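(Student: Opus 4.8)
The plan is to analyze the two ranges of $\lambda$ separately. Write $\theta_x=2\pi(x-1)/m$, so that $B_m(x,y)=\frac{1}{m^2}\big(1-k\cos(\theta_x+\theta_y)\big)$; since $k<1$ we have $B_m\in\mbb{R}_{>0}^{m\times m}$, and since $\sum_{j=0}^{m-1}\cos(2\pi j/m+c)=0$ the marginals are uniform, $\sum_aB_m(a,y)=\sum_bB_m(x,b)=\tfrac1m$.

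For item~2 I would show $B_m$ is not reachable at all once $\lambda>1-\sqrt k$, which immediately forces $\inf_\rho\textup{size}(\rho)=+\infty$ and hence $\mathcal{S}_\lambda(B_m)=0$. Plugging the uniform marginals into the necessary condition \eqref{eq:reachabilitylemma}, the bracket under the square root becomes $\max\{0,\tfrac1{m^2}-B_m(x,\varphi(x))\}=\tfrac{k}{m^2}\max\{0,\cos(\theta_x+\theta_{\varphi(x)})\}$, so the right-hand side of \eqref{eq:reachabilitylemma} equals $1-\tfrac{\sqrt k}{m}\max_{\varphi\in S_m}\sum_{x=1}^m\sqrt{\max\{0,\cos(\theta_x+\theta_{\varphi(x)})\}}$. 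Each summand is $\le1$, and the reflection $\varphi(x)\equiv 2-x\pmod m$ makes $\theta_x+\theta_{\varphi(x)}\equiv 0$ for every $x$, so the maximum is exactly $m$ and the condition collapses to $\lambda\le 1-\sqrt k$; thus $\lambda>1-\sqrt k$ rules out reachability, giving item~2.

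For item~1 we have $\mathcal{S}_\lambda(B_m)=\mathrm{R}(B_m)/\lceil\log_2\mathrm{C}_\lambda(B_m)\rceil$, so I need (i) $\mathrm{R}(B_m)=\Omega(\log\log m)$ and (ii) $\mathrm{C}_\lambda(B_m)=O(1)$ for all $\lambda\le 1-\sqrt k$. For (i): the appendix already gives $\operatorname{rank}_+(A_m)>\log_2(m/2)$, and by subadditivity of nonnegative rank $\operatorname{rank}_+(A_m)\le\operatorname{rank}_+(B_m)+2$, since $A_m$ is $\tfrac{1+q^2}{2}B_m$ bordered by one extra row and column and that border matrix has nonnegative rank $\le 2$; hence $\mathrm{R}(B_m)=\lceil\log_2\operatorname{rank}_+(B_m)\rceil\ge\log_2\!\big(\log_2(m/2)-2\big)=\Omega(\log\log m)$. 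For (ii) I would exhibit a size-$2$ $\E_\lambda$-PSD factorization of $B_m$ valid on the whole interval; by \eqref{eq:cost_leq_ranknoisy} this yields $\mathrm{C}_\lambda(B_m)\le 2$, hence $\lceil\log_2\mathrm{C}_\lambda(B_m)\rceil=1$ and $\mathcal{S}_\lambda(B_m)\ge\mathrm{R}(B_m)=\Omega(\log\log m)$. Concretely, set $w_x=(\cos(\theta_x/2),\sin(\theta_x/2))^{T}$ and $z_y=(\sin(\theta_y/2),\cos(\theta_y/2))^{T}$ in $\mbb{R}^2$; the half-angle choice gives $(w_x^{T}z_y)^2=\sin^2\!\big(\tfrac{\theta_x+\theta_y}{2}\big)=\tfrac12\big(1-\cos(\theta_x+\theta_y)\big)$ with no spurious higher harmonics, and $\sum_xw_xw_x^{T}=\sum_yz_yz_y^{T}=\tfrac m2 I_2$. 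Put
\[
C_x=\tfrac{\sqrt{2k}}{m}\big(w_xw_x^{T}+\gamma I_2\big),\qquad D_y=\tfrac{\sqrt{2k}}{m}\big(z_yz_y^{T}+\gamma I_2\big),\qquad \gamma=\tfrac{1-\sqrt k}{2\sqrt k}\ge 0.
\]
Then $\operatorname{tr}(C_xD_y)=\tfrac{2k}{m^2}\big(\tfrac12(1-\cos(\theta_x+\theta_y))+2\gamma+2\gamma^2\big)$, and one checks $\tfrac12+2\gamma+2\gamma^2=\tfrac1{2k}$, so $\operatorname{tr}(C_xD_y)=B_m(x,y)$. Moreover $S:=\sum_xC_x=\sqrt{2k}\,(\tfrac12+\gamma)I_2$ is a scalar matrix, so $\operatorname{tr}(C_xS^{-1})=\tfrac2m$ and the residual is $C_x-\tfrac\lambda2\cdot\tfrac2m\cdot S=\tfrac{\sqrt{2k}}{m}\big(w_xw_x^{T}+(\gamma(1-\lambda)-\tfrac\lambda2)I_2\big)$, which is PSD precisely when $\gamma(1-\lambda)\ge\tfrac\lambda2$, i.e.\ after substituting $\gamma$, when $\lambda\le 1-\sqrt k$; the identical computation handles $D_y$. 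This establishes (ii) and completes the theorem.

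The main obstacle is exactly part (ii). A textbook PSD factorization of $B_m$ can be taken rank-one in each $C_x$, but a rank-one matrix can never dominate a positive multiple of $S=\sum_kC_k$, so the $\E_\lambda$-residual constraint fails for \emph{every} $\lambda>0$; the dimension bound $\mathrm{C}_\lambda\le\operatorname{rank}_\textup{psd}(\hat P_\lambda^{\mathbf{s},\mathbf{t}})\lceil 1/\min\{s_x,t_y\}\rceil$ of \eqref{eq:upper_bound} is useless here because that factor is $\ge m$. The remedy—adding a multiple $\gamma I_2$ of the identity to each factor—is harmless for $\operatorname{tr}(C_xD_y)$ (it only shifts it by a constant, which $B_m$'s ``constant minus rank-one oscillation'' structure lets one reabsorb into the overall normalization), but the delicate point is that the value of $\gamma$ forced by reproducing $B_m$ is \emph{exactly} the value at which the residual PSD condition saturates at $\lambda=1-\sqrt k$. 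That these two thresholds coincide is what keeps the quantum cost $O(1)$ throughout $[0,1-\sqrt k]$ while reachability (and hence the advantage) vanishes the instant $\lambda$ crosses it—the sudden death.
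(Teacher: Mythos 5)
Your proof is correct and follows essentially the same route as the paper's: for item~2, the reflection permutation $\varphi(1)=1,\ \varphi(x)=m+2-x$ together with the reachability bound \eqref{eq:reachabilitylemma}, and for item~1, a $2\times 2$ $\E_\lambda$-PSD factorization of $B_m$ combined with the lower bound on $\operatorname{rank}_+(B_m)$ from Lemma~\ref{lemma:B_ranks}. Your explicit real factorization $C_x=\tfrac{\sqrt{2k}}{m}\bigl(w_xw_x^{T}+\gamma I_2\bigr)$ is unitarily equivalent to the paper's complex Hermitian one in Lemma~\ref{lemma:B_ranks} (same eigenvalues $\tfrac{1\pm\sqrt k}{\sqrt 2\,m}$, hence the same saturation threshold $\lambda=1-\sqrt k$), and your detour through $A_m$ plus subadditivity is sound but unnecessary, since Lemma~\ref{lemma:B_ranks} already gives $\operatorname{rank}_+(B_m)>\log_2(m/2)$ directly.
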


\begin{proof}
    Let $\varphi(1)=1,\varphi(x)=m+2-x$. We have $B_m(x,\varphi(x))=(1-k)/m^2$ for all $x$. Then
    \begin{align*}
        \sum_bB_m(x,b)\sum_{a}B_m(a,\varphi(x))-B_m(x,\varphi(x))=\frac{k}{m^2}.
    \end{align*}
    By Eq.\eqref{eq:reachabilitylemma}, if $B_m$ can be generated under noise $\E_\lambda\otimes\E_\lambda$, we must have
    \begin{align*}
        0\le\lambda\le1-\sum_{x=1}^m\sqrt{\max\left\{0,\sum_bB_m(x,b)\sum_{a}B_m(a,\varphi(x))-B_m(x,\varphi(x))\right\}}=1-\sqrt{k}.
    \end{align*}
    Thus for $\lambda>1-\sqrt{k}$, $B_m$ is not reachable for any noisy quantum protocols, meaning that $\mathcal{S}_{\lambda}(B_m)=\frac{\mathrm{R}(B_m)}{\infty}=0$.

    For $0\le\lambda\le 1-\sqrt{k}$, we selcet the $2\times 2$ hermitian matrices $\{C_x\}$ and $\{D_y\}$ introduced in Lemma~\ref{lemma:B_ranks} of Appendix~\ref{app:properties_of_B} as a PSD decomposition of $B_m$, then it holds that
    \begin{align*}
        &C_x-\frac{\lambda}{r}\tr(C_x\Big(\sum_{k=1}^nC_k\Big)^{-1})\sum_{k=1}^nC_k=C_x-\frac{\lambda}{\sqrt{2}m}I,\\
        &D_y-\frac{\lambda}{r}\tr(D_y\Big(\sum_{k=1}^nD_k\Big)^{-1})\sum_{k=1}^nD_k=D_y-\frac{\lambda}{\sqrt{2}m}I.
    \end{align*}
     Note that the minimal eigenvalues of $C_x$ and $D_y$ are all $\frac{1}{\sqrt{2}m}(1-\sqrt{k})$, which means both of $C_i-\frac{\lambda}{\sqrt{2}m}I$ and $D_j-\frac{\lambda}{\sqrt{2}m}I$ are PSD matrices, thus $\{C_x\}$ and $\{D_y\}$ form an $\E_\lambda$-PSD factorization of $B_m$. By Eq.\eqref{eq:cost_leq_ranknoisy}, we have
    \begin{align*}
    \mathrm{C}_{\lambda}(B_m)\le\operatorname{rank}_{\textup{psd}}^{\E_\lambda}(B_m)\le2.
    \end{align*}
    Since $B_m$ is not a product distribution, we have $\mathrm{C}_{\lambda}(B_m)=2$. That is, if $0\le\lambda\le 1-\sqrt{k}$, we have
    \begin{align*}
    \mathcal{S}_{\lambda}(B_m)=\frac{\mathrm{R}(B_m)}{\lceil\log_2\mathrm{C}_{\lambda}(B_m)\rceil}=\mathrm{R}(B_m),
    \end{align*}
    which is a constant independent of $\lambda$. By Lemma \ref{lemma:B_ranks} in Appendix~\ref{app:properties_of_B}, we have $\mathrm{R}(B_m)\ge\lceil\log_2\log_2(m/2)\rceil$, which completes the proof.
\end{proof}

For any classical correlation $P$, when the sudden death of quantum advantage happens to its generation, $\mathrm{C}_{\lambda}(P)$ will be upper bounded for any $\lambda\in \Lambda(P)$, which means that $\sup_{\lambda\in \Lambda(P)}\mathrm{C}_{\lambda}(P)<\infty$. To figure out when such a phenomenon occurs, we provide two equivalent conditions as below.

\begin{theorem}\label{thm:sudden_death}
    For any correlation $P\in\mbb{R}_{>0}^{n\times n}$, the following statements are equivalent to each other:
    \begin{enumerate}
        \item The quantum advantage in generating $P$ experiences the sudden death when $\lambda$ continuously goes up from $0$.
        \item $\Lambda(P)$ is a right closed interval.
        \item For any $\lambda>0$, if there exist non-negative vectors $\mathbf{s},\mathbf{t}\in\mbb{R}_{\ge0}^{n}$ with $\norm{\mathbf{s}}_1=\norm{\mathbf{t}}_1=1$ such that
        $\hat{P}_\lambda^{\mathbf{s},\mathbf{t}}$
        defined in Eq.\eqref{DefOfPLambmaHat} is a nonnegative matrix, then $P$ can be generated quantumly under the noise channel $\E_\lambda\otimes\E_\lambda$.
    \end{enumerate}
\end{theorem}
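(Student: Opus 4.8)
The plan is to translate the three statements into facts about two subsets of $[0,1]$. One is the reachability interval $\Lambda(P)$, which the excerpt already shows is of the form $[0,\lambda^*)$ or $[0,\lambda^*]$. The other is
\[
\widetilde\Lambda(P)\ :=\ \Big\{\lambda\in[0,1]\ :\ \exists\,\mathbf{s},\mathbf{t}\in\mbb{R}_{\ge0}^n,\ \norm{\mathbf{s}}_1=\norm{\mathbf{t}}_1=1,\ \hat P_\lambda^{\mathbf{s},\mathbf{t}}\ge0\Big\},
\]
which is exactly the set of $\lambda$ appearing in the hypothesis of statement~$3$, whereas by Proposition~\ref{reachable} the set $\Lambda(P)$ is the same thing with $\mathbf{s},\mathbf{t}$ required \emph{strictly} positive. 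Thus $\Lambda(P)\subseteq\widetilde\Lambda(P)$ and, since $0\in\Lambda(P)$ always, statement~$3$ asserts precisely $\Lambda(P)=\widetilde\Lambda(P)$. I first note that $\widetilde\Lambda(P)$ is a \emph{compact} interval: the constraints $\hat P_\lambda^{\mathbf{s},\mathbf{t}}(x,y)\ge0$ cut out a closed bounded set in $(\lambda,\mathbf{s},\mathbf{t})$ ranging over $[0,1]$ times two probability simplices, and $\widetilde\Lambda(P)$ is its image under the continuous projection to the $\lambda$-axis (the interval property follows from the monotonicity lemma below). Hence $\Lambda(P)=\widetilde\Lambda(P)$ forces $\Lambda(P)$ to be closed, so $(3)\Rightarrow(2)$ is immediate. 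Finally I use the characterisation, noted after Theorem~\ref{thm:B_sudden_death}, that statement~$1$ amounts to $\sup_{\lambda\in\Lambda(P)}\mathrm{C}_\lambda(P)<\infty$ together with the occurrence of a genuine death; assuming $\operatorname{rank}(P)\ge2$ (if $\operatorname{rank}(P)=1$ then $P$ is a product distribution, $\mathrm{R}(P)=0$, and there is no advantage at all), Eq.~\eqref{eq:reachabilitylemma} gives $\lambda^*<1$, so a death always occurs and statement~$1$ reduces to the boundedness of $\mathrm{C}_\lambda(P)$ on $\Lambda(P)$.

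Two of the remaining implications are then short. For $(2)\Rightarrow(1)$: if $\lambda^*\in\Lambda(P)$, Proposition~\ref{reachable} supplies strictly positive $\mathbf{s},\mathbf{t}$ with $\hat P_{\lambda^*}^{\mathbf{s},\mathbf{t}}\ge0$; the monotonicity computation Eq.~\eqref{eq:derivative} then gives $\hat P_\lambda^{\mathbf{s},\mathbf{t}}\ge0$ for every $\lambda\in[0,\lambda^*]$, and feeding this into the upper bound Eq.~\eqref{eq:upper_bound} together with the trivial estimate $\operatorname{rank}_{\textup{psd}}(\hat P_\lambda^{\mathbf{s},\mathbf{t}})\le n$ yields $\mathrm{C}_\lambda(P)\le n\lceil 1/\min_{x,y}\{s_x,t_y\}\rceil$, a bound uniform in $\lambda$; hence $\mathcal{S}_\lambda(P)$ stays above a fixed positive constant on $\Lambda(P)$ and then drops to $0$ for $\lambda>\lambda^*$, which is a sudden death. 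For $(1)\Rightarrow(2)$: sudden death gives a uniform bound $\mathrm{C}_\lambda(P)\le D$ on $\Lambda(P)$; picking $\lambda_k\uparrow\lambda^*$ and, by pigeonhole, passing to a subsequence along which the witnessing seed states $\sigma_k$ and local POVMs $\{E_x^k\},\{F_y^k\}$ all live in a single dimension $d^\ast\le D$, a further convergent subsequence (by compactness of the sets of $d^\ast$-dimensional bipartite states and of POVM tuples) produces a state and POVMs realising $P$ exactly at $\lambda^*$, because $(\lambda,\sigma,\{E_x\},\{F_y\})\mapsto\tr\!\big((E_x\otimes F_y)\,\E_\lambda\otimes\E_\lambda(\sigma)\big)$ is continuous; hence $\lambda^*\in\Lambda(P)$ and $\Lambda(P)$ is closed.

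The substantial implication is $(2)\Rightarrow(3)$. Assume $\Lambda(P)=[0,\lambda^*]$ and let $\lambda>0$ carry a relaxed certificate $(\mathbf{s}^0,\mathbf{t}^0)\in\mbb{R}_{\ge0}^n\times\mbb{R}_{\ge0}^n$; if $\lambda\le\lambda^*$ we are done, so suppose $\lambda>\lambda^*$ and fix any $\mu\in(\lambda^*,\lambda)$. The argument rests on two lemmas, both proved by examining the entrywise quadratics $q_{xy}(\nu)=s^0_xt^0_y\,\nu^2-\big(s^0_x\sum_aP(a,y)+t^0_y\sum_bP(x,b)\big)\nu+P(x,y)$, for which $q_{xy}(0)=P(x,y)>0$. \emph{(i) A monotonicity lemma for non-negative $\mathbf{s},\mathbf{t}$}: if $\hat P_\nu^{\mathbf{s},\mathbf{t}}\ge0$ for some $\nu<1$, then $\hat P_{\nu'}^{\mathbf{s},\mathbf{t}}\ge0$ for all $\nu'\le\nu$, even when $\mathbf{s},\mathbf{t}$ have zero entries; this strengthens the excerpt's Eq.~\eqref{eq:derivative} computation and is proved by letting $\nu_1$ be the supremum of the bad set $\{\nu'\le\nu:\min_{x,y}\hat P_{\nu'}^{\mathbf{s},\mathbf{t}}(x,y)<0\}$ and deriving a contradiction at $\nu_1$ from the derivative identity, continuity, and $P\in\mbb{R}_{>0}^{n\times n}$. \emph{(ii) A strict-positivity lemma}: if $\hat P_\nu^{\mathbf{s},\mathbf{t}}\ge0$ holds for all $\nu\le\nu_{\max}$, then $\hat P_\nu^{\mathbf{s},\mathbf{t}}$ is entrywise \emph{strictly} positive for every $\nu<\nu_{\max}$. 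For (ii), at an interior zero $\hat P_\nu^{\mathbf{s},\mathbf{t}}(x_0,y_0)=0$ the derivative identity of Eq.~\eqref{eq:derivative} forces $s_{x_0}\sum_a\hat P_\nu^{\mathbf{s},\mathbf{t}}(a,y_0)=t_{y_0}\sum_b\hat P_\nu^{\mathbf{s},\mathbf{t}}(x_0,b)=0$; the cases $s_{x_0}=0$ and $t_{y_0}=0$ are ruled out using $\hat P_{\nu_{\max}}^{\mathbf{s},\mathbf{t}}\ge0$ with $\nu<\nu_{\max}$, which then forces the whole row $x_0$ and column $y_0$ of $\hat P_\nu^{\mathbf{s},\mathbf{t}}$ to vanish, and iterating this forces $\hat P_\nu^{\mathbf{s},\mathbf{t}}\equiv0$ --- impossible since $\sum_{x,y}\hat P_\nu^{\mathbf{s},\mathbf{t}}(x,y)=(1-\nu)^2>0$. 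Granting (i)--(ii) with $\nu_{\max}\ge\lambda>\mu$, the matrix $\hat P_\mu^{\mathbf{s}^0,\mathbf{t}^0}$ is strictly positive, so the perturbation $\mathbf{s}'=(1-\delta)\mathbf{s}^0+\delta\mathbf{u}$, $\mathbf{t}'=(1-\delta)\mathbf{t}^0+\delta\mathbf{u}$ with $\mathbf{u}=(1/n,\dots,1/n)$ and $\delta>0$ small keeps $\hat P_\mu^{\mathbf{s}',\mathbf{t}'}\ge0$ while making $\mathbf{s}',\mathbf{t}'$ strictly positive; Proposition~\ref{reachable} then gives $\mu\in\Lambda(P)$, contradicting $\mu>\lambda^*=\sup\Lambda(P)$. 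Hence $\lambda\le\lambda^*$, i.e.\ $\lambda\in\Lambda(P)$, which is statement~$3$.

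The main obstacle is the strict-positivity lemma (ii) (together with its prerequisite (i)): it says that the obstruction to generating $P$ quantumly can switch on only \emph{at} the right endpoint of the relaxed interval $\widetilde\Lambda(P)$, never strictly inside it --- which is exactly what distinguishes ``sudden death'' from the ``gradual decay'' of Theorem~\ref{thm:A_decay_UB_LB}. Everything else (the compactness limit for $(1)\Rightarrow(2)$, the uniform cost bound for $(2)\Rightarrow(1)$, and the bookkeeping around $\operatorname{rank}(P)\ge2$, $\mathrm{R}(P)\ge1$ and $\lambda^*<1$) is routine once these lemmas are available.
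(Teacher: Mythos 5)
Your proof is correct, and at the level of the four implications it proves exactly what the paper's appendix proves (namely the contrapositives of the paper's $1\Rightarrow2$, $2\Rightarrow3$, $3\Rightarrow2$, $2\Rightarrow1$ for the negated statements). Two pieces are essentially identical to the paper: your $(2)\Rightarrow(1)$ (certificate at $\lambda^*$ plus the upper bound Eq.~\eqref{eq:upper_bound} giving a uniform cost bound, hence a sudden drop), and your $(3)\Rightarrow(2)$ (a compactness argument; the paper does it with finite intersections of the sets $ST_\lambda$, you do it by noting $\widetilde\Lambda(P)$ is a compact interval and statement~3 is the assertion $\Lambda(P)=\widetilde\Lambda(P)$ --- a cleaner packaging of the same idea).

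Where you genuinely diverge is in the remaining two implications. For $(1)\Rightarrow(2)$ the paper uses the lower bound Eq.~\eqref{eq:lower_bound} to extract a uniform lower bound $v$ on the entries of admissible $(\mathbf{s},\mathbf{t})$, and then runs a compactness argument in that finite-dimensional parameter space to produce a strictly positive certificate at $\lambda^*$; you instead work directly with the optimal seed states and POVMs in a bounded dimension $d^\ast\le D$ and take a convergent subsequence, recovering a witness for $\lambda^*$ from continuity of the map $(\lambda,\sigma,\{E_x\},\{F_y\})\mapsto\tr((E_x\otimes F_y)\mc{E}_\lambda\otimes\mc{E}_\lambda(\sigma))$. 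Your route is more intrinsic (it never touches the $(\mathbf{s},\mathbf{t})$ parametrisation) and arguably more transparent, while the paper's stays entirely inside the Proposition~\ref{reachable} reformulation. For $(2)\Rightarrow(3)$ the underlying idea is the same --- show that the relaxed certificate at $\lambda_u$ is actually entrywise \emph{strictly} positive slightly to its left and then perturb $(\mathbf{s},\mathbf{t})$ into the open simplex --- but the mechanics differ: the paper carries out a direct three-case computation at $\lambda_u-\epsilon$ (both entries zero, one zero, neither zero), each case immediately giving $\hat P_{\lambda_u-\epsilon}>0$; your lemma~(ii) instead argues that an interior zero of $\hat P_\nu$ forces, via vanishing of the derivative and iterated propagation of zeros through rows and columns, that $\hat P_\nu\equiv 0$, contradicting $\sum_{x,y}\hat P_\nu(x,y)=(1-\nu)^2>0$. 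This is heavier than the paper's case analysis, but it has the advantage of isolating two reusable statements (your lemmas~(i) and (ii)), including the monotonicity of the $\hat P$ constraints for merely non-negative $(\mathbf{s},\mathbf{t})$, which the paper uses but never spells out; for that reason your lemma~(i) is worth having written down. Both routes are valid, and your overall argument is a correct alternative exposition of the theorem.
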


The proof for Theorem~\ref{thm:sudden_death} is deferred to Appendix~\ref{app:proof_of_SuddenSeath_equivalent}. As a simple application of this theorem, we now use it to reexamine the correlation $B_m$. According to the proof for Theorem~\ref{thm:B_sudden_death} we have that $\Lambda(B_m)=\big[0,1-\sqrt{k}\big]$, which is right-closed. Then Theorem~\ref{thm:sudden_death} immediately implies that the quantum advantage in generating $B_m$ suffers from the sudden death.

\section{Discussion}

Based on an elegant sampling model in which quantum advantage has been identified and even quantified, we mathematically characterize the dynamical process that noise of increasing strength gradually suppresses and even destroys the quantum advantage. We also report a surprising phenomenon that under the impact of noise, sudden death may happen to the quantum advantage in this model. To our knowledge, this is the first quantum protocol that sudden death is exhibited to quantum advantage, implying that in different quantum information processing tasks the impact of noise may have different behaviors. Our work suggests that, to fully understand the power of NISQ computers, we need to make further efforts to characterize the nature of quantum noise for different computational models of these computers, especially the quantum circuit model.

\begin{acknowledgments} We thank Xun Gao and Zhengfeng Ji for helpful discussions, and Zhengwei Liu for valuable comments. Weixiao Sun and Zhaohui Wei were supported in part by the National Natural Science
Foundation of China under Grant 62272259 and Grant 62332009; and
in part by Beijing Natural Science Foundation under Grant Z220002.
Fuchuan Wei and Yuguo Shao were supported by BMSTC and ACZSP (Grant No.~Z221100002722017).
\end{acknowledgments}

\bibliographystyle{naturemag}
\bibliography{ref}

\clearpage

\appendix

\setcounter{section}{0}
\setcounter{theorem}{0}
\setcounter{definition}{0}
\setcounter{proposition}{0}
\setcounter{corollary}{0}
\setcounter{lemma}{0}

\renewcommand{\thetheorem}{\Alph{section}.\arabic{theorem}}
\renewcommand{\thedefinition}{\Alph{section}.\arabic{definition}}
\renewcommand{\thelemma}{\Alph{section}.\arabic{lemma}}
\renewcommand{\theproposition}{\Alph{section}.\arabic{proposition}}

\section{The proof for Theorem~\ref{thm:PureEnt_advantage}}\label{app:proof_of_PureEnt}

\begin{theorem}
Suppose $\rho=\ketbra{\psi}{\psi}$ is a bipartite entangled pure quantum state on $\mc{H}_A\otimes\mc{H}_B$.
Then, there always exists a family of classical correlation $\{P_m\}_{m\in\mbb{Z}^+}$ such that $\rho\xrightarrow{0}P_m$ for all $m$, and  $\underset{m\rightarrow\infty}{\operatorname{lim}}\mathcal{S}_{0}(P_m)=\infty$.
\end{theorem}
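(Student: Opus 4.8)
The plan is to exhibit the family $\{P_m\}$ explicitly by starting from the modified Euclidean distance matrices $\mathrm{EDM}'_m = L\cdot\mathrm{EDM}_m\cdot R$ described in the sketch, and to show that every bipartite entangled pure state $\ket{\psi}$ can generate, in the noiseless setting, a classical correlation $P_m$ containing some such $\mathrm{EDM}'_m$ as a submatrix. Write $\ket{\psi} = \sum_{i} \sqrt{p_i}\,\ket{i}_A\ket{i}_B$ in its Schmidt decomposition; since $\ket{\psi}$ is entangled it has Schmidt rank at least $2$, so at least two coefficients $p_1,p_2>0$. First I would reduce to the case of a two-qubit state: by coarse-graining the local measurements, Alice and Bob can restrict attention to the two-dimensional subspace spanned by $\ket{1},\ket{2}$ on each side, obtaining an effective seed $\ket{\phi} = \sqrt{p}\,\ket{00}+\sqrt{1-p}\,\ket{11}$ with $0<p<1$. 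It therefore suffices to prove that this two-qubit state can generate a $P_m$ whose submatrix is a modified EDM, since the quantum seed size only shrinks under this reduction, making $\mathcal{S}_0(P_m)$ only larger.

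The core step is the construction of local POVMs realizing $\mathrm{EDM}'_m$ from $\ket{\phi}$. I would recall that $\mathrm{rank}_{\mathrm{psd}}(\mathrm{EDM}_m)=2$ with an explicit PSD factorization $\mathrm{EDM}_m(x,y)=\tr(C_xD_y)$ by $2\times2$ PSD matrices (this is exactly where the EDM is convenient: its rank-$2$ PSD factors can be taken rank-one, essentially $C_x \propto \ketbra{u_x}{u_x}$ with $u_x$ built from $(\alpha_x,1)$-type vectors). Given any PSD factorization $P(x,y)=\tr(C_xD_y)$ with $r\times r$ factors, the standard correspondence (Jain–Shi–Wei–Zhang) turns it into a quantum protocol on a maximally entangled seed of local dimension $r$; for a general (non-maximally-entangled) seed one absorbs the Schmidt coefficients by rescaling, i.e. replacing $C_x$ by $\Sigma^{-1/2}C_x\Sigma^{-1/2}$ where $\Sigma=\mathrm{diag}(p,1-p)$, which is still PSD, and similarly for $D_y$; the diagonal rescalings $L,R$ in the definition of $\mathrm{EDM}'_m$ are precisely the extra freedom needed to re-normalize the rows and columns so that the resulting $P(x,y)=\tr(\tilde C_x\tilde D_y)$ is a genuine probability distribution (nonnegative, summing to one). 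So the modified EDM $\mathrm{EDM}'_m$ — rather than $\mathrm{EDM}_m$ itself — is what one can realize exactly with a fixed two-qubit seed. I would then pad $\mathrm{EDM}'_m$ into an honest correlation $P_m$ of size $(m+c)\times(m+c)$ for a small constant $c$ (adding rows/columns to soak up the leftover probability mass), keeping $\mathrm{EDM}'_m$ as a submatrix.

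Finally I would assemble the bounds. On one hand $\mathrm{size}(\rho)=\mathrm{size}(\ketbra{\phi}{\phi})=1$ (one qubit per party) suffices for the noiseless quantum protocol, so $\inf_\rho \mathrm{size}(\rho)\le 1$ for $P_m$. On the other hand, $\mathrm{R}(P_m)=\lceil\log_2\mathrm{rank}_+(P_m)\rceil$, and since the nonnegative rank can only drop when passing to a submatrix and is invariant under multiplication by positive diagonal matrices, $\mathrm{rank}_+(P_m)\ge \mathrm{rank}_+(\mathrm{EDM}'_m)=\mathrm{rank}_+(\mathrm{EDM}_m)\ge 2\sqrt{m}-2$ by Shitov's bound. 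Hence $\mathcal{S}_0(P_m)=\mathrm{R}(P_m)/\inf_\rho\mathrm{size}(\rho)\ge \lceil\log_2(2\sqrt m-2)\rceil = \Omega(\log m)\to\infty$, proving the theorem. The main obstacle I anticipate is the second paragraph: verifying carefully that the diagonal rescalings can be chosen so that the rescaled PSD factors $\tilde C_x,\tilde D_y$ yield POVM elements (PSD, summing to identity on the relevant two-dimensional supports) and simultaneously make the padded $P_m$ a bona fide probability distribution, i.e. matching the row/column normalizations — this is a bookkeeping argument about the interplay between $L,R$, the Schmidt coefficients $p$, and the POVM completeness relations, and it is the only place where something could genuinely go wrong rather than being routine.
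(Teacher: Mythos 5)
Your overall strategy matches the paper's: start from the Euclidean distance matrices, build a $2\times2$ rank-one PSD factorization, rescale it to a ``modified'' $\mathrm{EDM}'_m$ that the top-two-Schmidt-component projection of $\ket\psi$ can realize, pad with a complementary outcome so that $\rho$ itself generates a valid correlation $P_m$, and conclude from $\mathrm{rank}_+$ being preserved under positive diagonal rescaling and non-increasing under passage to submatrices.

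The gap you flag is the real one, and your proposed repair does not work as stated. Replacing both $C_x$ and $D_y$ by $\Sigma^{-1/2}(\cdot)\Sigma^{-1/2}$ changes $\tr(C_xD_y)$ to $\tr(\Sigma^{-1/2}C_x\Sigma^{-1}D_y\Sigma^{-1/2})$, which is not a row-and-column rescaling of $\mathrm{EDM}_m$, so the resulting factors do not factor any $L\cdot\mathrm{EDM}_m\cdot R$. The paper's fix is asymmetric and exploits special structure of the EDM: (i) pick the $\alpha_i$ with $\sum_i\alpha_i=0$, so the standard $2\times2$ factors satisfy $\sum_x C_x=\sum_y D_y=I/\sqrt2$, and choose $\alpha_1$ large enough that $\sum_y\mathrm{EDM}_m(1,y)>\mu_1-1/2$ where $\mu_1=\lambda_1/(\lambda_1+\lambda_2)$; (ii) set $L=\mathrm{diag}(1+r,1,\dots,1)$, $R=\mathrm{diag}(1-r,1,\dots,1)$ with $r=(\mu_1-1/2)/\sum_y\mathrm{EDM}_m(1,y)\in[0,1)$, which replaces $C_1,D_1$ by $(1+r)C_1,(1-r)D_1$; (iii) use the fact that $C_1D_1=0$ --- because $\mathrm{EDM}_m(1,1)=0$ forces the two rank-one factors to have orthogonal supports --- to compute directly that $\bigl(\sum_x L_{xx}C_x\bigr)\bigl(\sum_y R_{yy}D_y\bigr)$ has eigenvalues exactly $\mu_1,\mu_2$; (iv) diagonalize this factorization by equivalence and invoke a realization theorem for non-maximally-entangled seeds (Theorem~2 of the chen2024generation reference) to get POVMs supported on $\Pi^A\mc{H}_A$, $\Pi^B\mc{H}_B$ with $\tr(E_x\otimes F_y\ketbra{\psi'}{\psi'})=\mathrm{EDM}'_m(x,y)$, then append $I-\Pi^A$, $I-\Pi^B$ as extra outcomes so that $\ket\psi$ itself generates a genuine $(m+1)\times(m+1)$ correlation with $(\lambda_1+\lambda_2)\mathrm{EDM}'_m$ as a block. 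Without an argument at this level of precision --- matching the eigenvalues of the factor sums to the Schmidt coefficients --- the ``bookkeeping'' you describe is exactly where the proof would stall.
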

\begin{proof}
Suppose $\dim(\mc{H}_A)=\dim(\mc{H}_B)=d\geq 2$,  the Schmidt decomposition of $\ket{\psi}$ is
\begin{align*}
\ket{\psi}=\sum_{i=1}^d\sqrt{\lambda_i}\ket{\phi_i^A}\otimes\ket{\phi_i^B},
\end{align*}
where $\sqrt{\lambda_1}\ge\cdots\ge\sqrt{\lambda_d}>0$ are the non-zero Schmidt coefficients.
Denote $\mu_1=\frac{\lambda_1}{\lambda_1+\lambda_2}$, $\mu_2=\frac{\lambda_2}{\lambda_1+\lambda_2}$, and let
\begin{align*}
\ket{\psi'}=\sqrt{\mu_1}\ket{\phi_1^A}\otimes\ket{\phi_1^B}+\sqrt{\mu_2}\ket{\phi_2^A}\otimes\ket{\phi_2^B}.
\end{align*}
Take $\Pi^A=\ketbra{\phi_1^A}{\phi_1^A}+\ketbra{\phi_2^A}{\phi_2^A}$, $\Pi^B=\ketbra{\phi_1^B}{\phi_1^B}+\ketbra{\phi_2^B}{\phi_2^B}$ be two projectors, we have $\Pi^A\otimes\Pi^B\ket{\psi}=\sqrt{\lambda_1+\lambda_2}\ket{\psi'}$.

For $m$ distinct real numbers $\alpha_1,\cdots,\alpha_m$, define the $m\times m$  Euclidean distance matrix $\mathrm{EDM}_m$ by $\mathrm{EDM}_m(x,y)=\frac{(\alpha_x-\alpha_y)^2}{\sum_{ij}(\alpha_i-\alpha_j)^2}$. For arbitrary $m$, we can pick $\alpha_1,\cdots,\alpha_m$ as described in \cite{Shitov2019Euclidean}, such that
\begin{enumerate}
    \item $\operatorname{rank}_{\mathrm{psd}}(\mathrm{EDM}_m)=2$ and $\operatorname{rank}_{+}(\mathrm{EDM}_m)\ge2\sqrt{m}-2$,
    \item $\alpha_1+\cdots+\alpha_m=0$,
    \item $\alpha_1$ big enough, such that $\sum_{y=1}^m\mathrm{EDM}_m(1,y)=\frac{\sum_y(\alpha_1-\alpha_y)^2}{\sum_{ij}(\alpha_i-\alpha_j)^2}>\mu_1-\frac{1}{2}$.
\end{enumerate}

For $x,y=1,\cdots,m$, let
\begin{align*}
C_x&=\frac{1}{\sqrt{2}}
\left(\begin{array}{cc}\frac{\alpha_x^2}{\sum_i\alpha_i^2} & \frac{-\alpha_x}{\sqrt{m\sum_i\alpha_i^2}} \\
\frac{-\alpha_x}{\sqrt{m\sum_i\alpha_i^2}} & \frac{1}{m}
\end{array}\right),\\
D_y&=\frac{1}{\sqrt{2}}
\left(\begin{array}{cc}\frac{1}{m} & \frac{\alpha_y}{\sqrt{m\sum_i\alpha_i^2}} \\
\frac{\alpha_y}{\sqrt{m\sum_i\alpha_i^2}} & \frac{\alpha_y^2}{\sum_i\alpha_i^2}
\end{array}\right).
\end{align*}
By $\sum_i\alpha_i=0$, we know that $\sum_{ij}(\alpha_i-\alpha_j)^2=2m\sum_i\alpha_i^2$, thus we have $\tr(C_xD_y)=\frac{(\alpha_x-\alpha_y)^2}{2m\sum_i\alpha_i^2}=\mathrm{EDM}_m(x,y)$. Meanwhile, it can be verified that
\begin{align*}
\sum_xC_x=\sum_yD_y=
\left(\begin{array}{cc}
\frac{1}{\sqrt{2}} &  \\
 & \frac{1}{\sqrt{2}}
\end{array}\right).
\end{align*}
Denote $r=\frac{\mu_1-1/2}{\sum_{y=1}^m\mathrm{EDM}_m(1,y)}$, then we have $0\le r<1$. Let $\mathrm{EDM}_m'=L\cdot\mathrm{EDM}_m\cdot R$, where
\begin{align*}
L&=\operatorname{diag}\left(1+r,1,\cdots,1\right),\\
R&=\operatorname{diag}\left(1-r,1,\cdots,1\right).
\end{align*}
Since $L$ and $R$ are diagonal matrices with diagonal elements $>0$, we have $\operatorname{rank}_+(\mathrm{EDM}_m')=\operatorname{rank}_+(\mathrm{EDM}_m)$. Note that
\begin{equation}\label{eq:PSD_decomposition_of_rescaled}
\begin{aligned}
&\{(1+r)C_1,C_2,\cdots,C_m\},\\
&\{(1-r)D_1,D_2,\cdots,D_m\}
\end{aligned}
\end{equation}
form a PSD decomposition of $\mathrm{EDM}_m'$. Since $C_1,D_1\in\mbb{C}^{2\times2}$ are PSD matrices satisfying $\tr(C_1D_1)=\mathrm{EDM}_m(1,1)=0$, we know that $C_1$ and $D_1$ are two rank-$1$ projectors with $C_1D_1=0$. Thus the eigenvalues of
\begin{align*}
\left((1+r)C_1+\sum_{x=2}^mC_x\right)\left((1-r)D_1+\sum_{y=2}^mD_y\right)=\left(rC_1+\frac{1}{\sqrt{2}}I_2\right)\left(-rD_1+\frac{1}{\sqrt{2}}I_2\right)=\frac{1}{2}I_2+\frac{r}{\sqrt{2}}(C_1-D_1)
\end{align*}
are
\begin{align*}
&\frac{1}{2}+\frac{r}{\sqrt{2}}\tr(C_1)=\frac{1}{2}+\frac{r}{\sum_{y=1}^m\mathrm{EDM}_m(1,y)}=\mu_1,\\
&\frac{1}{2}-\frac{r}{\sqrt{2}}\tr(D_1)=\frac{1}{2}-\frac{r}{\sum_{y=1}^m\mathrm{EDM}_m(1,y)}=\mu_2.
\end{align*}
Denote the diagonal form of PSD factorization of $\mathrm{EDM}_m'$ equivalent to Eq.\eqref{eq:PSD_decomposition_of_rescaled} by $\{C_x',D_y'\}$, where by equivalence we mean that there exists an invertible $H$ such that $C_x'=HC_xH^\dagger$ and $D_y'={(H^\dagger)}^{-1}D_yH^{-1}$ (such an $H$ always exists~\cite{lin2023all}). Then we have $\tr(C_x'D_y')=\mathrm{EDM}_m'(x,y)$ and
\begin{align*}
\sum_xC_x'=\sum_yD_y'=\left(\begin{array}{cc}
\sqrt{\mu_1} &  \\
 & \sqrt{\mu_2}
\end{array}\right),
\end{align*}
which coincides with the Schmidt coefficients of $\ket{\psi'}$. By Theorem 2 of \cite{chen2024generation}, there exist POVMs $\{E_x\}$ and $\{F_y\}$ on $\Pi^A\mc{H}_A\otimes\Pi^B\mc{H}_B$ (That is, $\Pi^AE_x\Pi^A=E_x$, $\Pi^BF_y\Pi^B=F_y$, $\sum_xE_x=\Pi^A$, and $\sum_yF_y=\Pi^B$) such that $\tr(E_x\otimes F_y\ketbra{\psi'}{\psi'})=\mathrm{EDM}_m'(x,y)$.

Now consider measuring POVMs
\begin{align*}
&\{E_1,\cdots,E_m,I-\Pi^A\},\\
&\{F_1,\cdots,F_m,I-\Pi^B\},
\end{align*}
on the two subsystems of $\ketbra{\psi}{\psi}$ respectively. Since
\begin{align*}
&\tr\left(E_x\otimes F_y\ketbra{\psi}{\psi}\right)=(\lambda_1+\lambda_2)\mathrm{EDM}_m'(x,y),\\
&\tr\left(E_x\otimes(I-\Pi^B)\ketbra{\psi}{\psi}\right)=0,\\
&\tr\left((I-\Pi^A)\otimes F_y\ketbra{\psi}{\psi}\right)=0,\\
&\tr\left((I-\Pi^A)\otimes(I-\Pi^B)\ketbra{\psi}{\psi}\right)=1-\lambda_1-\lambda_2,
\end{align*}
we have generated an $(m+1)\times(m+1)$ correlation
\begin{align*}
P_m = \left(\begin{array}{cc}
(\lambda_1+\lambda_2)\mathrm{EDM}_m' & 0  \\
0 & 1-\lambda_1-\lambda_2
\end{array}\right)
\end{align*}
with the seed state $\rho=\ketbra{\psi}{\psi}$. And it holds that
\begin{align*}
\operatorname{rank}_+(P_m)\ge\operatorname{rank}_+(\mathrm{EDM}_m')=\operatorname{rank}_+(\mathrm{EDM}_m)\ge2\sqrt{m}-2.
\end{align*}
Since $m$ can be arbitrary large, we have $\underset{m\rightarrow\infty}{\operatorname{lim}}\mathcal{S}_{0}(P_m)=\infty$.
\end{proof}

\section{The proof for Proposition~\ref{reachable}}
\label{app:proof_reachable}
\begin{proposition}
Suppose $P\in\mbb{R}^{n\times n}_{>0}$ is a correlation and $\lambda<1$ is a noise strength. Then there exists a dimension $d$ and a quantum state $\sigma\in D(\mbb{C}^{d}\otimes\mbb{C}^{d})$  making $\sigma\xrightarrow{\lambda}P$, if and only if there exists $\mathbf{s},\mathbf{t}\in\mbb{R}_{>0}^n$ with $\norm{\mathbf{s}}_1=\norm{\mathbf{t}}_1=1$ such that
\begin{align}\label{DefOfPLambmaHat_app}
    \hat{P}_\lambda^{\mathbf{s},\mathbf{t}}(x,y) \equiv P(x,y)-\lambda  s_x\sum_aP(a,y)-\lambda  t_y\sum_bP(x,b)+\lambda^2 s_x t_y\ge0\
\end{align}
holds for all $x,y$.
\end{proposition}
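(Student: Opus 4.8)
The plan is to prove both implications by directly expanding the action of the bipartite depolarizing channel $\E_\lambda\otimes\E_\lambda$ on a seed state and reading off the matrix of output probabilities; once that expansion is in hand everything reduces to elementary trace identities. The main expansion I would use throughout is
$(\E_\lambda\otimes\E_\lambda)(\sigma)=(1-\lambda)^2\sigma+\lambda(1-\lambda)\tfrac{I_d}{d}\otimes\sigma_B+\lambda(1-\lambda)\sigma_A\otimes\tfrac{I_d}{d}+\lambda^2\tfrac{I_d}{d}\otimes\tfrac{I_d}{d}$,
where $\sigma_A,\sigma_B$ are the reduced states.

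\emph{The ``only if'' direction.} Suppose $\sigma\xrightarrow{\lambda}P$ via local POVMs $\{E_x\}$ on Alice's $d$-dimensional system and $\{F_y\}$ on Bob's. Plugging the expansion above into $P(x,y)=\tr((E_x\otimes F_y)(\E_\lambda\otimes\E_\lambda)(\sigma))$ gives four terms; setting $s_x=\tr(E_x)/d$ and $t_y=\tr(F_y)/d$ yields $\mathbf s,\mathbf t$ with $\norm{\mathbf s}_1=\norm{\mathbf t}_1=1$ (from $\sum_xE_x=I_d$), and these are strictly positive because $s_x=0$ would force $E_x=0$ and hence a zero row of $P$, contradicting $P\in\mbb{R}_{>0}^{n\times n}$. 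I would then compute the marginals from the same four-term expression, obtaining $\sum_aP(a,y)=(1-\lambda)\tr(F_y\sigma_B)+\lambda t_y$ and symmetrically $\sum_bP(x,b)=(1-\lambda)\tr(E_x\sigma_A)+\lambda s_x$. Substituting these into the definition of $\hat P_\lambda^{\mathbf s,\mathbf t}(x,y)$ makes every $\lambda$-dependent cross term cancel, leaving $\hat P_\lambda^{\mathbf s,\mathbf t}(x,y)=(1-\lambda)^2\tr((E_x\otimes F_y)\sigma)\ge0$.

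\emph{The ``if'' direction.} Given $\mathbf s,\mathbf t\in\mbb{R}_{>0}^n$ with unit $1$-norm and $\hat P_\lambda^{\mathbf s,\mathbf t}\ge0$ entrywise, I would set $Q:=\tfrac{1}{(1-\lambda)^2}\hat P_\lambda^{\mathbf s,\mathbf t}$ and observe it is again a classical correlation: nonnegative, with $\sum_{xy}\hat P_\lambda^{\mathbf s,\mathbf t}(x,y)=(1-\lambda)^2$ by a one-line computation using $\norm{\mathbf s}_1=\norm{\mathbf t}_1=1$ and $\sum_{xy}P(x,y)=1$. Since $Q$ has finite nonnegative rank (it is a sum of at most $n^2$ nonnegative rank-one matrices), it is generated by some noiseless protocol, say $\sigma'$ on $\mbb{C}^{d'}\otimes\mbb{C}^{d'}$ with POVMs $\{E_x'\},\{F_y'\}$. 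The key point is that the row and column sums of $Q$ are already what the noise terms require: summing $\hat P_\lambda^{\mathbf s,\mathbf t}$ over $y$ gives $\sum_yQ(x,y)=\tfrac{1}{1-\lambda}\big(\sum_bP(x,b)-\lambda s_x\big)$ and symmetrically in columns, so the reduced states need no engineering. What remains is to control the traces $\tr(E_x),\tr(F_y)$. For that I would enlarge each local space by a fresh block of dimension $d''$, define $E_x=E_x'\oplus\tfrac{(d'+d'')s_x-\tr(E_x')}{d''}I_{d''}$ and analogously $F_y$, keep $\sigma=\sigma'$ supported on the original blocks, and take $d''$ large enough that every coefficient is nonnegative (possible since $s_x,t_y>0$), so $\{E_x\},\{F_y\}$ are genuine POVMs. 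Expanding $(\E_\lambda\otimes\E_\lambda)(\sigma)$ once more and using $\tr(E_x)/(d'+d'')=s_x$, $\tr(E_x\sigma_A)=\sum_yQ(x,y)$, $\tr(F_y\sigma_B)=\sum_xQ(x,y)$, one checks term by term that $\tr\big((E_x\otimes F_y)(\E_\lambda\otimes\E_\lambda)(\sigma)\big)=P(x,y)$.

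The channel expansion and the algebra verifying the cross-term cancellations are routine. The step I expect to take the most care is the ``if'' direction: recognizing that the marginals of $Q$ come out correct automatically, and then padding the POVMs with a scalar block on a fresh ancilla, which is exactly what adjusts $\tr(E_x)$ to $d\,s_x$ while leaving $\tr((E_x\otimes F_y)\sigma)$, $\tr(E_x\sigma_A)$ and $\tr(F_y\sigma_B)$ untouched. Confirming that the enlarged $E_x,F_y$ stay positive and resolve the identity, and that a single $d''$ works simultaneously for all $x$ and $y$, is precisely where the hypothesis $\mathbf s,\mathbf t>0$ is needed.
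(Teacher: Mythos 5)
Your proof is correct and follows essentially the same route as the paper: for the ``only if'' direction you expand $\E_\lambda\otimes\E_\lambda(\sigma)$, set $s_x=\tr(E_x)/d$, $t_y=\tr(F_y)/d$, and observe that $\hat P_\lambda^{\mathbf{s},\mathbf{t}}(x,y)=(1-\lambda)^2\tr(E_x\otimes F_y\,\sigma)\ge 0$; for the ``if'' direction you recognize $Q=\tfrac{1}{(1-\lambda)^2}\hat P_\lambda^{\mathbf{s},\mathbf{t}}$ as a correlation with the right marginals, generate it noiselessly, and pad the POVMs so that $\tr(E_x)=ds_x$, $\tr(F_y)=dt_y$ without disturbing $\tr(E_x\otimes F_y\,\sigma)$ or the reduced-state traces. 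The only cosmetic difference is the padding: you use a direct-sum ancilla block $E_x=E_x'\oplus c_x I_{d''}$, whereas the paper dilates by a tensor factor $\mbb{C}^k$ and puts the excess weight on $(I_k-\ketbra{0}{0})\otimes I_{d'}/d'$; both are equivalent and both rely on $\mathbf{s},\mathbf{t}>0$ and the finiteness of $x,y$ to choose a single ancilla dimension that works uniformly.
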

\begin{proof}
$(\Rightarrow)$ Note that
$$
\begin{aligned}
P(x,y)=&\tr(E_x\otimes F_y\E_{\lambda}\otimes\E_\lambda(\sigma))\\
=&(1-\lambda)^2\tr(E_x\otimes F_y\sigma)+\lambda(1-\lambda)\frac{\tr(E_x)}{d}\tr(F_y\sigma_B)\\
&+\lambda(1-\lambda)\tr(E_x\sigma_A)\frac{\tr(F_y)}{d}+\lambda^2\frac{\tr(E_x)}{d}\frac{\tr(F_y)}{d}\\
=&(1-\lambda)^2\tr(E_x\otimes F_y\sigma)+\lambda\frac{\tr(E_x)}{d}\left(\sum_aP(a,y)-\lambda \frac{\tr(F_y)}{d}\right)\\
&+\lambda\frac{\tr(F_y)}{d}\left(\sum_b P(x,b)-\lambda\frac{\tr(E_x)}{d}\right)+\lambda^2\frac{\tr(E_x)}{d}\frac{\tr(F_y)}{d}\\
=&(1-\lambda)^2\tr(E_x\otimes F_y\sigma)+\lambda\frac{\tr(E_x)}{d}\sum_aP(a,y)+\lambda\frac{\tr(F_y)}{d}\sum_b P(x,b)-\lambda^2\frac{\tr(E_x)}{d}\frac{\tr(F_y)}{d},
\end{aligned}
$$
where $\sigma_A=\tr_B(\sigma)$ and $\sigma_B=\tr_A(\sigma)$ are the reduced density matrices.
If we can take $ s_x=\frac{\tr(E_x)}{d}$, $ t_y=\frac{\tr(F_y)}{d}$, then
$$
\begin{aligned}
P(x,y)-\lambda  s_x\sum_aP(a,y)-\lambda  t_y\sum_bP(x,b)+\lambda^2 s_x t_y=(1-\lambda)^2\tr(E_x\otimes F_y\sigma)\ge0.
\end{aligned}
$$
With the properties of POVM, we have $\mathbf{s},\mathbf{t}\in\mbb{R}_{>0}^n$, and $\norm{\mathbf{s}}_1=\sum_x\frac{\tr(E_x)}{d}=\frac{\tr(I_d)}{d}=1$. Similarly, it holds that $\norm{\mathbf{t}}_1=1$.

$(\Leftarrow)$ Denote that
$$
    \hat{P}_\lambda^{\mathbf{s},\mathbf{t}}(x,y)=P(x,y)-\lambda  s_x\sum_aP(a,y)-\lambda  t_y\sum_bP(x,b)+\lambda^2 s_x t_y,
$$
which naturally satisfies $\sum_{xy}\frac{1}{(1-\lambda)^2}\hat{P}_\lambda^{\mathbf{s},\mathbf{t}}(x,y)=1$.

Hence, $\frac{1}{(1-\lambda)^2}\hat{P}_\lambda^{\mathbf{s},\mathbf{t}}$ can be regarded as a correlation. Based on the results in Ref.\cite{Jain2013efficient}, we know that there exist POVMs $\{E_x'\}$ and $\{F_y'\}$, and a quantum state $\sigma'\in D(\mbb{C}^{d'}\otimes\mbb{C}^{d'})$ such that $\tr(E_x'\otimes F_y'\sigma')=\frac{1}{(1-\lambda)^2}\hat{P}_\lambda^{\mathbf{s},\mathbf{t}}(x,y)$, where $d'=\operatorname{rank_{psd}}(\hat{P}_\lambda^{\mathbf{s},\mathbf{t}})$.

Take $k\in\mbb{Z}^+$ such that
$$
\begin{aligned}
d'k s_x-\tr(E_x')\ge0,\\
d'k t_y-\tr(F_y')\ge0,
\end{aligned}
$$
for all $x,y$. Denote $d=d'k$ and $\sigma=\ketbra{0}{0}\otimes\sigma'\otimes\ketbra{0}{0}$, where $\ketbra{0}{0}\in D(\mbb{C}^k)$. Let
$$
\begin{aligned}
E_x&=\ketbra{0}{0}\otimes E_x'+\left(\frac{d'k s_x-\tr(E_x')}{k-1}\right)( I_k-\ketbra{0}{0})\otimes \frac{ I_{d'}}{d'},\\
F_y&=F_y'\otimes\ketbra{0}{0}+\left(\frac{d'k t_y-\tr(F_y')}{k-1}\right)\frac{ I_{d'}}{d'}\otimes( I_k-\ketbra{0}{0}),
\end{aligned}
$$
which are PSD matrices. Note that
$$
\begin{aligned}
\sum_{x}E_x&=\ketbra{0}{0}\otimes\sum_x E_x'+\left(\frac{d'k\sum_x s_x-\tr(\sum_xE_x')}{k-1}\right)( I_k-\ketbra{0}{0})\otimes \frac{ I_{d'}}{d'}\\
&=\ketbra{0}{0}\otimes  I_{d'}+\left(\frac{d'k-d'}{k-1}\right)( I_k-\ketbra{0}{0})\otimes \frac{ I_{d'}}{d'}= I_k\otimes I_{d'}= I_d,\\
\sum_{y}F_y&=\sum_y F_y'\otimes\ketbra{0}{0}+\left(\frac{d'k\sum_bt_b-\tr(\sum_yF_y')}{k-1}\right)\frac{ I_{d'}}{d'}\otimes( I_k-\ketbra{0}{0})\\
&= I_{d'}\otimes\ketbra{0}{0}+\left(\frac{d'k-d'}{k-1}\right)\frac{ I_{d'}}{d'}\otimes( I_k-\ketbra{0}{0})= I_{d'}\otimes I_k= I_d,
\end{aligned}
$$
thus $\{E_x\}$, $\{F_y\}$ are valid POVMs.
We also have
$$
\begin{aligned}
\tr(E_x\otimes F_y\sigma)&=\tr(\ketbra{0}{0}\otimes E_x'\otimes F_y'\otimes\ketbra{0}{0}\cdot\ketbra{0}{0}\otimes\sigma'\otimes\ketbra{0}{0}+0+0+0)=\frac{1}{(1-\lambda)^2}\hat{P}_\lambda^{\mathbf{s},\mathbf{t}}(x,y),\\
\tr(E_x)&=\tr(E_x')+\left(\frac{d'k s_x-\tr(E_x')}{k-1}\right)(k-1)=d'k s_x=d s_x,\\
\tr(F_y)&=\tr(F_y')+\left(\frac{d'k t_y-\tr(F_y')}{k-1}\right)(k-1)=d'k t_y=d t_y.
\end{aligned}
$$

Finally, note that
\begin{align*}
\sum_a\hat{P}_\lambda^{\mathbf{s},\mathbf{t}}(a,y)&=(1-\lambda)\left(\sum_aP(a,y)-\lambda t_y\right),\\
\sum_b\hat{P}_\lambda^{\mathbf{s},\mathbf{t}}(x,b)&=(1-\lambda)\left(\sum_bP(x,b)-\lambda s_x\right).
\end{align*}

Thus we have
\begin{align*}
&\Tr(E_x\otimes F_y\E_{\lambda}\otimes\E_\lambda(\sigma))\\
=&(1-\lambda)^2\tr(E_x\otimes F_y\sigma)+\lambda(1-\lambda)\frac{\tr(E_x)}{d}\tr(F_y\sigma_B)
+\lambda(1-\lambda)\tr(E_x\sigma_A)\frac{\tr(F_y)}{d}+\lambda^2\frac{\tr(E_x)}{d}\frac{\tr(F_y)}{d}\\
=&\hat{P}_\lambda^{\mathbf{s},\mathbf{t}}(x,y)+\frac{\lambda}{1-\lambda} s_x\left(\sum_a\hat{P}_\lambda^{\mathbf{s},\mathbf{t}}(a,y)\right)+\frac{\lambda}{1-\lambda} t_y\left(\sum_b\hat{P}_\lambda^{\mathbf{s},\mathbf{t}}(x,b)\right)+\lambda^2 s_x t_y\\
=&\hat{P}_\lambda^{\mathbf{s},\mathbf{t}}(x,y)+\lambda s_x\left(\sum_aP(a,y)-\lambda t_y\right)+\lambda t_y\left(\sum_bP(x,b)-\lambda s_x\right)+\lambda^2 s_x t_y=P(x,y),
\end{align*}
which completes the proof.
\end{proof}

\section{The proof for Eq.\eqref{eq:reachabilitylemma}}\label{appendix_for_reachabilitylemma}

We restate the conclusion in Eq.\eqref{eq:reachabilitylemma}: For a given classical correlation $P\in\mbb{R}^{n\times n}_{\ge0}$, there exists a quantum state $\sigma\in D(\mbb{C}^{d}\otimes\mbb{C}^{d})$ satisfying $\sigma\xrightarrow{\lambda}P$,  only if
    \begin{align*}
        0\le\lambda\le1-\max_{\varphi\in S_n}\sum_{x=1}^n\sqrt{\max\left\{0,\sum_bP(x,b)\sum_{a}P(a,\varphi(x))-P(x,\varphi(x))\right\}},
    \end{align*}
where $S_n$ is a symmetric group of degree $n$.

\begin{proof}
By Proposition~\ref{reachable}, there exist $\mathbf{s},\mathbf{t}\in\mbb{R}_{>0}^n$ with $\norm{\mathbf{s}}_1=\norm{\mathbf{t}}_1=1$ satisfying that
\begin{equation}\label{eq:hat_p}
    \hat{P}_\lambda^{\mathbf{s},\mathbf{t}}(a,b)=P(a,b)-\lambda s_a\sum_aP(a,b)-\lambda t_b\sum_bP(a,b)+\lambda^2s_at_b\ge0.
\end{equation}
Then, summing $a$ and $b$ separately yields
\begin{align*}
    \sum_aP(a,y)&=
    \frac{1}{1-\lambda}\sum_a\hat{P}_\lambda^{\mathbf{s},\mathbf{t}}(a,y)+\lambda  t_y,\\
    \sum_bP(x,b)&=\frac{1}{1-\lambda}\sum_b\hat{P}_\lambda^{\mathbf{s},\mathbf{t}}(x,b)+\lambda  s_x.
\end{align*}
Thus by replacing $\sum_aP(a,y)$ and $\sum_bP(a,y)$ in Eq.\eqref{eq:hat_p}, $P(x,y)$ can be expressed as
\begin{align*}
P(x,y)&=\hat{P}_\lambda^{\mathbf{s},\mathbf{t}}(x,y)+\frac{\lambda}{1-\lambda} s_x\left(\sum_a\hat{P}_\lambda^{\mathbf{s},\mathbf{t}}(a,y)\right)+\frac{\lambda}{1-\lambda} t_y\left(\sum_b\hat{P}_\lambda^{\mathbf{s},\mathbf{t}}(x,b)\right)+\lambda^2 s_x t_y \\
&=\left(\hat{P}_\lambda^{\mathbf{s},\mathbf{t}}(x,y)-\frac{1}{(1-\lambda)^2}\sum_a\hat{P}_\lambda^{\mathbf{s},\mathbf{t}}(a,y)\sum_b\hat{P}_\lambda^{\mathbf{s},\mathbf{t}}(x,b)\right)+\left(\frac{1}{1-\lambda}\sum_a\hat{P}_\lambda^{\mathbf{s},\mathbf{t}}(a,y)+\lambda  t_y\right)\left(\frac{1}{1-\lambda}\sum_b\hat{P}_\lambda^{\mathbf{s},\mathbf{t}}(x,b)+\lambda  s_x\right)\\
&=\left(\hat{P}_\lambda^{\mathbf{s},\mathbf{t}}(x,y)-\frac{1}{(1-\lambda)^2}\sum_a\hat{P}_\lambda^{\mathbf{s},\mathbf{t}}(a,y)\sum_b\hat{P}_\lambda^{\mathbf{s},\mathbf{t}}(x,b)\right)+\sum_aP(a,y)\sum_bP(x,b).
\end{align*}
Then for all $x,y$, it holds that
\begin{align*}
    \frac{1}{(1-\lambda)^2}\sum_a\hat{P}_\lambda^{\mathbf{s},\mathbf{t}}(a,y)\sum_b\hat{P}_\lambda^{\mathbf{s},\mathbf{t}}(x,b)&=\hat{P}_\lambda^{\mathbf{s},\mathbf{t}}(a,b)-\left(P(x,y)-\sum_aP(a,y)\sum_bP(x,b)\right)\\
    &\ge-\left(P(x,y)-\sum_aP(a,y)\sum_bP(x,b)\right).
\end{align*}
Note that $\sum_a\hat{P}_\lambda^{\mathbf{s},\mathbf{t}}(a,y)\sum_b\hat{P}_\lambda^{\mathbf{s},\mathbf{t}}(x,b)\ge0$ and $\sum_{a,b}\hat{P}_\lambda^{\mathbf{s},\mathbf{t}}(a,b)=(1-\lambda)^2$. Then for any permutation $\varphi\in S_n$, we have
\begin{align*}
    2&\ge\frac{1}{(1-\lambda)^2}\sum_{x}\left(\sum_a\hat{P}_\lambda^{\mathbf{s},\mathbf{t}}(a,\varphi(x))+\sum_b\hat{P}_\lambda^{\mathbf{s},\mathbf{t}}(x,b)\right)\\
    &\ge\frac{2}{(1-\lambda)^2}\sum_{x}\sqrt{\sum_a\hat{P}_\lambda^{\mathbf{s},\mathbf{t}}(a,\varphi(x))\sum_b\hat{P}_\lambda^{\mathbf{s},\mathbf{t}}(x,b)}\\
    &\ge2\sum_{x}\sqrt{\max\left\{0,-\frac{1}{(1-\lambda)^2}\left(P(x,\varphi(x))-\sum_aP(a,\varphi(x))\sum_bP(x,b)\right)\right\}}\\
    &=\frac{2}{1-\lambda}\sum_{x}\sqrt{\max\left\{0,\sum_aP(a,\varphi(x))\sum_bP(x,b)-P(x,\varphi(x))\right\}}.
\end{align*}
This means that
\begin{align*}
    \lambda\le1-\sum_{x=1}^n\sqrt{\max\left\{0,\sum_bP(x,b)\sum_{a}P(a,\varphi(x))-P(x,\varphi(x))\right\}}.
\end{align*}
\end{proof}

\section{The proof for Eq.\eqref{eq:cost_leq_ranknoisy}}
\label{app:new_psd_rank}

We restate the conclusion in Eq.\eqref{eq:cost_leq_ranknoisy}: It holds that
\begin{align*}
\mathrm{C}_{\lambda}(P)\le\operatorname{rank}_{\textup{psd}}^{\E_\lambda}(P).
\end{align*}
\begin{proof}
To show why this is the case, suppose $r\times r$ PSD matrices $\{C_i\},\{D_j\}$ is an $\E_\lambda$-PSD factorization of $P$. Note that $\{C_i\},\{D_j\}$ is also a PSD factorization of $P$.
    We assume that $\{\widetilde{C_i}\},\{\widetilde{D_j}\}$ is a diagonal PSD factorization equivalent to $\{C_i\},\{D_j\}$. Recall that by equivalence we mean that there exists an invertible $H$ such that $\widetilde{C_i}=HC_iH^\dagger$ and $\widetilde{D_j}={(H^\dagger)}^{-1}D_jH^{-1}$. It turns out that such an $H$ always exists~\cite{lin2023all}.
    We can verify that $\{\widetilde{C_i}\},\{\widetilde{D_j}\}$ is also an $\E_\lambda$-PSD factorization for $P$, since $\{\widetilde{C_i}\},\{\widetilde{D_j}\}$ is also a PSD factorization of $P$, and
    \begin{align*}
        &\widetilde{C_i}-\frac{\lambda}{r}\tr(\widetilde{C_i}\left(\sum_{k=1}^n\widetilde{C_k}\right)^{-1})\sum_{k=1}^n\widetilde{C_k}\\
        =&H\left(C_i-\frac{\lambda}{r}\tr(HC_iH^\dagger\left(\sum_{k=1}^nHC_kH^\dagger\right)^{-1})\sum_{k=1}^n C_k\right)H^\dagger\\
        =&H\left(C_i-\frac{\lambda}{r}\tr(C_i\left(\sum_{k=1}^nC_k\right)^{-1})\sum_{k=1}^nC_k\right)H^\dagger
    \end{align*}
    is a PSD matrix (a similar conclusion holds for $\widetilde{D_j}$). Based on the above discussion, without loss of generality, we can suppose $\sum_k C_k=\sum_k D_k=\Lambda$, where $\Lambda=\textup{diag}(\beta_1,\dots,\beta_r)$ is an invertible diagonal matrix and $\tr(\Lambda^2)=1$.

    Let
    \begin{align*}
        M_i&=\frac{1}{1-\lambda}\sqrt{\Lambda^{-1}}\left(C_i^T-\frac{\lambda}{r}\tr(C_i\Lambda^{-1})\Lambda\right)\sqrt{\Lambda^{-1}},\\
        N_j&=\frac{1}{1-\lambda}\sqrt{\Lambda^{-1}}\left(D_j-\frac{\lambda}{r}\tr(D_j\Lambda^{-1})\Lambda\right)\sqrt{\Lambda^{-1}},\\
        |\psi_\Lambda\rangle&=\sqrt{\Lambda}\otimes\sqrt{\Lambda}\sum_{k=1}^r|kk\rangle=\sum_{k=1}^r\beta_k|kk\rangle.
    \end{align*}
    We have that $\{M_i\},\{N_j\}$ are valid POVMs, and $|\psi_\Lambda\rangle$ is a state. Furthermore, it holds that  $\tr(M_i)=\tr(C_i\Lambda^{-1})$ and $\tr(N_j)=\tr(D_j\Lambda^{-1})$.
    In addition, it can be verified that $\E_\lambda(M_i)=\sqrt{\Lambda^{-1}}C_i^T\sqrt{\Lambda^{-1}}$ and  $\E_\lambda(N_j)=\sqrt{\Lambda^{-1}}D_j\sqrt{\Lambda^{-1}}$.
    Then we have that
    \begin{align*}
        &\tr(\E_\lambda\otimes\E_\lambda(|\psi_\Lambda\rangle\langle\psi_\Lambda|)M_i\otimes N_j)\\
        =&\tr(|\psi_\Lambda\rangle\langle\psi_\Lambda|\E_\lambda(M_i)\otimes \E_\lambda(N_j))\\
        =&\tr(|\psi_\Lambda\rangle\langle\psi_\Lambda|\sqrt{\Lambda^{-1}}C_i^T\sqrt{\Lambda^{-1}}\otimes \sqrt{\Lambda^{-1}}D_j\sqrt{\Lambda^{-1}})\\
        =&\tr(\sum_{k=1}^r|kk\rangle\sum_{k=1}^r\langle kk|C_i^T\otimes D_j)\\
        =&\tr(C_i D_j)\\
        =&P_{ij}
    \end{align*}
    That is, $|\psi_\Lambda\rangle$ can generate $P$ under noise $\E_\lambda\otimes\E_\lambda$, which means the minimum local dimension of a seed state for $P$ under noise $\E_\lambda\otimes\E_\lambda$ is no more than $\operatorname{rank}_{\textup{psd}}^{\E_\lambda}(P)$.
\end{proof}

\section{The proofs for Eq.\eqref{eq:upper_bound} and Eq.\eqref{eq:lower_bound}}
\label{app:proof_upper_lower_bound}

We restate Eq.\eqref{eq:upper_bound} and Eq.\eqref{eq:lower_bound}:
Given $P\in\mbb{R}_{>0}^{n\times n}$ and $\lambda\ge0$, if there exists $\mathbf{s},\mathbf{t}\in\mbb{R}_{>0}^n$ with $\norm{\mathbf{s}}_1=\norm{\mathbf{t}}_1=1$ such that $\hat{P}_\lambda^{\mathbf{s},\mathbf{t}}(x,y)\ge0$ for all $x,y$, then
\begin{align*}
\mathrm{C}_{\lambda}(P)&\le\rank_\textup{psd}(\hat{P}_\lambda^{\mathbf{s},\mathbf{t}})\left\lceil\frac{1}{\min_{x,y}\{ s_x, t_y\}}\right\rceil,
\end{align*}
and
\begin{align*}
\mathrm{C}_{\lambda}(P)&\ge\frac{1}{1-\lambda}\left(\inf_{\mathbf{s}',\mathbf{t}'}\max_{x,y}\left\{\frac{\sum_bP(x,b)}{ s'_x},\frac{\sum_aP(a,y)}{ t'_y}\right\}-\lambda\right),
\end{align*}
where $\mathbf{s}',\mathbf{t}'\in\mbb{R}_{>0}^n$ with $\norm{\mathbf{s}'}_1=\norm{\mathbf{t}'}_1=1$ satisfy $\hat{P}_\lambda^{\mathbf{s}',\mathbf{t}'}(x,y)\ge0$.

\begin{proof}
(Proof of Eq.\eqref{eq:upper_bound})

As shown in the proof for Proposition~\ref{reachable}, if there exist $\mathbf{s},\mathbf{t}\in\mbb{R}_{>0}^n$ with $\norm{\mathbf{s}}_1=\norm{\mathbf{t}}_1=1$ such that $\hat{P}_\lambda^{\mathbf{s},\mathbf{t}}(x,y)\ge0$ for all $x,y$,
one can find POVMs $\{E_x'\}$ and $\{F_y'\}$, and a quantum state $\sigma'\in D(\mbb{C}^{d'}\otimes\mbb{C}^{d'})$ such that $\tr(E_x'\otimes F_y'\sigma')=\frac{1}{(1-\lambda)^2}\hat{P}_\lambda^{\mathbf{s},\mathbf{t}}(x,y)$, where $d'=\operatorname{rank_{psd}}(\hat{P}_\lambda^{\mathbf{s},\mathbf{t}})$.

By taking $k\in\mbb{Z}^+$ such that
\begin{equation}\label{eq:take_k}
\begin{aligned}
d'k s_x-\tr(E_x')\ge0,\\
d'k t_y-\tr(F_y')\ge0,
\end{aligned}
\end{equation}
for all $x,y$, we can construct a $d\times d$ seed state $\sigma=\ketbra{0}{0}\otimes\sigma'\otimes\ketbra{0}{0}$, $\ketbra{0}{0}\in D(\mbb{C}^k)$, and POVMs
$$
\begin{aligned}
E_x&=\ketbra{0}{0}\otimes E_x'+\left(\frac{d'k s_x-\tr(E_x')}{k-1}\right)( I_k-\ketbra{0}{0})\otimes \frac{ I_{d'}}{d'},\\
F_y&=F_y'\otimes\ketbra{0}{0}+\left(\frac{d'k t_y-\tr(F_y')}{k-1}\right)\frac{ I_{d'}}{d'}\otimes( I_k-\ketbra{0}{0}),
\end{aligned}
$$
to generate $P$, i.e., $P(x,y)=
\tr(E_x\otimes F_y\E_{\lambda}\otimes\E_\lambda(\sigma))$, where $d=d'k$.

In the above process, the dimension for the seed state is $d=d'k=\rank_\textup{psd}(\hat{P}_\lambda^{\mathbf{s},\mathbf{t}})k$, thus we have $\mathrm{C}_{\lambda}(P)\le\rank_\textup{psd}(\hat{P}_\lambda^{\mathbf{s},\mathbf{t}})k$. Actually, to ensure that Eq.\eqref{eq:take_k} holds, it suffices to take
\begin{align*}
k=\Bigg\lceil\underset{x,y}{\max}\left\{\frac{\tr(E_x')}{\rank_\textup{psd}(\hat{P}_\lambda^{\mathbf{s},\mathbf{t}})}\frac{1}{ s_x},\frac{\tr(F_y')}{\rank_\textup{psd}(\hat{P}_\lambda^{\mathbf{s},\mathbf{t}})}\frac{1}{ t_y}\right\}\Bigg\rceil\le\left\lceil\frac{1}{\min_{x,y}\{ s_x, t_y\}}\right\rceil.
\end{align*}

(Proof of Eq.\eqref{eq:lower_bound})

Suppose for a local dimension $d$, a quantum state $\sigma\in D(\mbb{C}^{d}\otimes\mbb{C}^{d})$, and POVMs $\{E_x,F_y\}$, it holds that $P(x,y)=\tr(E_x\otimes F_y\E_\lambda\otimes\E_\lambda(\sigma))$. Let $ s_x=\tr(E_x)/d$ and $ t_y=\tr(F_y)/d$. By Proposition~\ref{reachable} we have $\hat{P}_\lambda^{\mathbf{s},\mathbf{t}}(x,y)$ defined in Eq.\eqref{DefOfPLambmaHat} is nonnegative for all $x,y$.
Denote $\sigma_A=\tr_B(\sigma)$ and $\sigma_B=\tr_A(\sigma)$. For arbitrary $x$, we have
\begin{align*}
\sum_bP(x,b)=\tr(E_x\E_\lambda(\sigma_A))=(1-\lambda)\tr(E_x\sigma_A)+\lambda\frac{\tr(E_x)}{d}\le(1-\lambda)\tr(E_x)+\lambda\frac{\tr(E_x)}{d}=(d(1-\lambda)+\lambda) s_x,
\end{align*}
thus $ s_x\ge\frac{\sum_bP(x,b)}{d(1-\lambda)+\lambda}$, which leads to $d\geq \frac{1}{1-\lambda}\left(\frac{\sum_bP(x,b)}{ s_x}-\lambda\right)$. Similarly, for arbitrary $y$, we can obtain $d\ge\frac{1}{1-\lambda}\left(\frac{\sum_aP(a,y)}{ t_y}-\lambda\right)$, which leads to
\begin{align}\label{eq:proof_of_C_LB}
% \mathrm{C}_{\lambda}(P)
d&\ge\frac{1}{1-\lambda}\left(\max_{x,y}\left\{\frac{\sum_bP(x,b)}{ s_x},\frac{\sum_aP(a,y)}{ t_y}\right\}-\lambda\right).
\end{align}
On the both sides of Eq.\eqref{eq:proof_of_C_LB}, taking infimum over all possible quantum protocols which can generate $P$, we get
\begin{align*}
\mathrm{C}_{\lambda}(P)\ge\frac{1}{1-\lambda}\left(\inf_{\mathbf{s}',\mathbf{t}'}\max_{x,y}\left\{\frac{\sum_bP(x,b)}{ s'_x},\frac{\sum_aP(a,y)}{ t'_y}\right\}-\lambda\right).
\end{align*}

\end{proof}

\section{Some properties of $A_m$ and $B_m$}\label{app:properties_of_B}

Recall that $A_m\in\mathbb{R}^{(m+1)\times(m+1)}$ is a classical correlation with the entries given by
\begin{align*}
    A_m = \left(\begin{array}{cccc}
    \frac{(1-q)^2}{2}&\frac{q(1-q)}{2m}&\cdots&\frac{q(1-q)}{2m}\\
    \frac{q(1-q)}{2m}&&&\\
    \vdots&&\frac{1+q^2}{2}B_m&\\
    \frac{q(1-q)}{2m}&&&
    \end{array}\right),
\end{align*}
where $k\in (0,1)$ is a variable, $q=\frac{1}{1-k}-\sqrt{\frac{1}{(1-k)^2}-1}$, and $B_m\in\mathbb{R}^{m\times m}$ is also a classical correlation given by
\begin{align*}
    B_m(x,y)=\frac{1}{m^2}\left(1-k\cos(2\pi\frac{x-1+y-1}{m})\right),x,y=1,2,\cdots,m.
\end{align*}

As shown in Fig.~\ref{fig:polygon} of the main text, $B_m $ represents the slack matrix associated with two concentric regular polygons, denoted by $ R_{\mathrm{out}} $ and $ R_{\mathrm{in}} $ respectively.
Here $ R_{\mathrm{out}} $ has an inscribed circle with a radius of 1, while $ R_{\mathrm{in}} $ has a circumscribed circle with a radius $ k < 1 $.
The vertices of $ R_{\mathrm{in}} $ are positioned at the midpoints of the sides of $R_{\mathrm{out}}$. More details about the slack matrix can be found in \cite{fawzi2015positive}.

The outer polygon can be expressed as $R_{\mathrm{out}}=\{v\in\mbb{R}^2 | a_x^T v\leq 1 ,x=1,\cdots,m\}$, where
\begin{align}\label{eq:Rout}
a_x=\left(\cos{\frac{2(x-1)\pi}{m}},\sin{\frac{2(x-1)\pi}{m}}\right)
\end{align}
is the midpoint of $x$-th edge of the large polygon. The $y$-th vertices of the smaller polygon $R_{\mathrm{in}}$ is
\begin{align}\label{eq:Rin}
b_y=\left(k\cos{\frac{2(y-1)\pi}{m}},-k\sin{\frac{2(y-1)\pi}{m}}\right).
\end{align}
Thus the slack matrix between $R_{\mathrm{out}}$ and $R_{\mathrm{in}}$ is $S_{B,A}=(1-a_xb_y^T)_{xy}=\left(1-k\cos(2\pi\frac{x-1+y-1}{m})\right)_{xy=1,\cdots,m}$, which leads to that $B_m=\frac{1}{m^2}S_{B,A}$.

\begin{lemma}\label{lemma:B_ranks}
For $B_m$ defined above, we have
\begin{enumerate}
    \item $\operatorname{rank}_+(B_m)> \log_2(m/2)$, when $k>\frac{\cos(2\pi/m)}{\cos^2(\pi/m)}$,
    \item $\operatorname{rank}_{\textup{psd}}(B_m)=2$.
\end{enumerate}
\end{lemma}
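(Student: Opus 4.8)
The plan is to treat the two parts by different methods.

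\medskip
\noindent\emph{Part 2 (the PSD-rank).} To prove $\operatorname{rank}_{\mathrm{psd}}(B_m)=2$ I would write down an explicit $2\times 2$ PSD factorization. Putting $\theta_j=2\pi(j-1)/m$, take
\[
C_x=\frac{1}{\sqrt 2\,m}\begin{pmatrix}1+\sqrt k\cos\theta_x & \sqrt k\sin\theta_x\\ \sqrt k\sin\theta_x & 1-\sqrt k\cos\theta_x\end{pmatrix},\qquad
D_y=\frac{1}{\sqrt 2\,m}\begin{pmatrix}1-\sqrt k\cos\theta_y & \sqrt k\sin\theta_y\\ \sqrt k\sin\theta_y & 1+\sqrt k\cos\theta_y\end{pmatrix}.
\]
Each of these equals $\tfrac{1}{\sqrt 2\,m}(I_2+\sqrt k\,[\text{reflection}])$, hence is PSD for $0<k<1$ with eigenvalues $\tfrac{1\pm\sqrt k}{\sqrt 2\,m}$ --- in particular with least eigenvalue $\tfrac{1-\sqrt k}{\sqrt 2\,m}$, the value invoked in the proof of Theorem~\ref{thm:B_sudden_death}. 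A direct computation using $\sum_x\cos\theta_x=\sum_x\sin\theta_x=0$ and $\cos\theta_x\cos\theta_y-\sin\theta_x\sin\theta_y=\cos(\theta_x+\theta_y)$ confirms $\tr(C_xD_y)=\tfrac{1}{m^2}\bigl(1-k\cos(\theta_x+\theta_y)\bigr)=B_m(x,y)$ and $\sum_xC_x=\sum_yD_y=\tfrac{1}{\sqrt 2}I_2$, so $\{C_x\},\{D_y\}$ is a PSD decomposition and $\operatorname{rank}_{\mathrm{psd}}(B_m)\le 2$. For the reverse inequality I would note that $\mathbf 1$, $(\cos\theta_x)_x$ and $(\sin\theta_x)_x$ are linearly independent, so $\operatorname{rank}(B_m)=3$; in particular $B_m$ is not a product distribution, whence $\operatorname{rank}_{\mathrm{psd}}(B_m)\ge 2$.

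\medskip
\noindent\emph{Part 1 (the nonnegative rank).} Here I would invoke the geometric interpretation of nonnegative rank~\cite{fawzi2015positive,gillis2012geometric}. Since $B_m=\tfrac{1}{m^2}S_{B,A}$ is a positive multiple of the slack matrix of the nested pair $R_{\mathrm{in}}\subseteq R_{\mathrm{out}}$, a nonnegative factorization of $B_m$ of inner dimension $r$ yields, via the Yannakakis-type correspondence for pairs of polytopes, a polygon $R$ with $R_{\mathrm{in}}\subseteq R\subseteq R_{\mathrm{out}}$ whose extension complexity satisfies $\operatorname{xc}(R)\le r$ (here $\operatorname{xc}(R)$ is the least number of facets of a polytope projecting onto $R$). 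Hence $\operatorname{rank}_+(B_m)\ge\min\{\operatorname{xc}(R):R_{\mathrm{in}}\subseteq R\subseteq R_{\mathrm{out}}\}$, and it suffices to show every such $R$ satisfies $\operatorname{xc}(R)>\log_2(m/2)$.

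\medskip
\noindent The geometric core is a lower bound on the number of edges of a sandwiched $R$. Because $R_{\mathrm{in}}$ is a regular $m$-gon of circumradius $k$ it contains the disk of radius $\rho_{\mathrm{in}}=k\cos(\pi/m)$ (its apothem), and because $R_{\mathrm{out}}$ is a regular $m$-gon of apothem $1$ it is contained in the disk of radius $\rho_{\mathrm{out}}=\sec(\pi/m)$ (its circumradius); thus $R$ lies between two concentric disks of radius ratio $\rho_{\mathrm{in}}/\rho_{\mathrm{out}}=k\cos^2(\pi/m)$, which exceeds $\cos(2\pi/m)$ exactly under the hypothesis $k>\cos(2\pi/m)/\cos^2(\pi/m)$. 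For any edge $e$ of $R$, its supporting line is at distance $\ge\rho_{\mathrm{in}}$ from the common center while both endpoints of $e$ are at distance $\le\rho_{\mathrm{out}}$, so $e$ subtends a central angle at most $2\arccos(\rho_{\mathrm{in}}/\rho_{\mathrm{out}})<2\arccos(\cos(2\pi/m))=4\pi/m$; since the central angles of the edges add up to $2\pi$, the polygon $R$ has more than $m/2$ edges. Finally, writing $R=\pi(K)$ with $K$ a polytope having $\operatorname{xc}(R)$ facets, $K$ --- and hence its projection $R$ --- has at most $2^{\operatorname{xc}(R)}$ faces, so $2^{\operatorname{xc}(R)}>m/2$, i.e.\ $\operatorname{xc}(R)>\log_2(m/2)$. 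Combined with the previous paragraph this gives $\operatorname{rank}_+(B_m)>\log_2(m/2)$.

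\medskip
\noindent I expect the delicate point to be the first step of Part 1 --- pinning down precisely that a nonnegative factorization of the slack matrix of the \emph{pair} $(R_{\mathrm{in}},R_{\mathrm{out}})$ produces a sandwiched polygon whose \emph{extension complexity}, not merely its facet count, is bounded by the factorization rank, and then converting ``more than $m/2$ edges'' into a bound on $\operatorname{rank}_+$ only through the inequality $2^{\operatorname{xc}(R)}\ge\#\{\text{faces of }R\}$; reading the edge count off directly would (incorrectly) suggest a linear rather than logarithmic lower bound. The trace and normalization identities in Part 2 and the plane-geometry estimate for the edges of $R$ are routine.
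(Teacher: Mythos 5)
Both parts are correct, and the proof is close to the paper's. Your Part~2 is the paper's construction up to a fixed change of basis: the paper writes the $2\times2$ factorization with complex Hermitian matrices whose off-diagonal entries are $\pm\sqrt{k}\,e^{\pm i\theta}$, and your real symmetric pair is conjugate to it (one is $\tfrac{1}{\sqrt2 m}(I+\sqrt k(\cos\theta\,X\mp\sin\theta\,Y))$, the other $\tfrac{1}{\sqrt2 m}(I+\sqrt k(\cos\theta\,Z\pm\sin\theta\,X))$); in particular both give the minimal eigenvalue $\tfrac{1-\sqrt k}{\sqrt2 m}$ that the proof of Theorem~\ref{thm:B_sudden_death} invokes, so yours can be substituted there verbatim. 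For Part~1 you reach the same geometric obstruction as the paper --- once $k>\cos(2\pi/m)/\cos^2(\pi/m)$, every polygon sandwiched between $R_{\mathrm{in}}$ and $R_{\mathrm{out}}$ has more than $m/2$ sides, obtained by trapping $R$ between the inscribed circle of $R_{\mathrm{in}}$ (radius $k\cos(\pi/m)$) and the circumscribed circle of $R_{\mathrm{out}}$ (radius $\sec(\pi/m)$); your central-angle count is just a repackaging of the paper's interior-angle count. The difference is how the $>m/2$ sides turn into a $\log_2$ bound: the paper passes through the \emph{restricted} nonnegative rank $\operatorname{rank}_+^*(B_m)$ and cites Theorem~6 of \cite{gillis2012geometric} for $\operatorname{rank}_+\ge\log_2\operatorname{rank}_+^*$, whereas you use $\operatorname{rank}_+(B_m)\ge\min\{\operatorname{xc}(R):R_{\mathrm{in}}\subseteq R\subseteq R_{\mathrm{out}}\}$ together with the elementary count that a polytope with $f$ facets has at most $2^f$ faces and that affine projection does not increase the number of faces. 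These are two phrasings of the same idea, and yours has the merit of making the $\log_2$ step self-contained; the one thing you should pin down with an explicit reference is the nested-pair Yannakakis statement that a rank-$r$ nonnegative factorization of the slack matrix of a nested pair produces a sandwiched polytope $R$ with $\operatorname{xc}(R)\le r$ --- this is a genuine theorem (Braun--Fiorini--Pokutta--Steurer, Gouveia--Parrilo--Thomas), but it is not Yannakakis' original theorem and is not literally what \cite{gillis2012geometric,fawzi2015positive} prove, so it should not be folded silently into those citations.
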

\begin{proof}
(Part 1)

By the geometric interpretation of $B_m$ we can derive a lower bound for $\operatorname{rank}_+(B_m)$. We first show that $\textup{rank}_+(B_m)> \log_2 l$ if $k>\frac{\cos(\pi/l)}{\cos^2(\pi/m)}$.

A lower bound for $\textup{rank}_+(B_m)$ is derived from $\textup{rank}_+^*(B_m)$, known as the restricted nonnegative rank. This quantity represents the minimum number $k$ of vertices of a convex polygon $T \in \mathbb{R}^2$ such that $R_{\mathrm{in}} \subset T \subset R_{\mathrm{out}}$, where $R_{\mathrm{out}}$ and $R_{\mathrm{in}}$ denote the polygons described in Eq.\eqref{eq:Rout} and \eqref{eq:Rin} respectively. More details can be found in \cite{gillis2012geometric}.

A necessary condition for an $l$-gon $T$ to be put between $R_{\mathrm{in}}$ and $R_{\mathrm{out}}$ is $ R_{\mathrm{in}}^C \subset T \subset R_{\mathrm{out}}^C $, where $ R_{\mathrm{out}}^C $ denotes the circumscribed circle of $R_{\mathrm{out}}$ with a radius of $\frac{1}{\cos{\frac{\pi}{m}}}$, and $R_{\mathrm{in}}^C$ denotes the inscribed circle of $R_{\mathrm{in}}$ with a radius of $k\cos{\frac{\pi}{m}}$. Therefore, every angle of such a $T$ is at least $2\arcsin(k\cos^2\frac{\pi}{m})$, while the smallest angle of any $l$-gon is at most $\pi-\frac{2\pi}{l}$. Consequently, it must hold that $k\le\frac{\cos(\pi/l)}{\cos^2(\pi/m)}$. In other words, if $ k > \frac{\cos(\pi/l)}{\cos^2(\pi/m)} $, no $ l $-gon can be put between $R_{\mathrm{in}}$ and $R_{\mathrm{out}}$. Thus, $\textup{rank}_+^*(B_m)>l$.
According to Theorem 6 in \cite{gillis2012geometric}, we have that $\rank_+(B_m)\geq \log_2(\rank^*_+(B_m))> \log_2(l)$.

By taking $l=m/2$, it follows that $\rank_+(B_m)> \log_2(l)=\log_2(m/2)$ when $k>\frac{\cos(\pi/l)}{\cos^2(\pi/m)}=\frac{\cos(2\pi/m)}{\cos^2(\pi/m)}$, where $\frac{\cos(2\pi/m)}{\cos^2(\pi/m)}<1$.
Thus for any $m\ge3$, we can always choose $k$ close enough to $1$ such that $\rank_+(B_m)>\log_2(m/2)$.

(Part 2)

We can take
\begin{equation}\label{eq:psd_of_B}
\begin{aligned}
&C_x=\frac{1}{\sqrt{2}m}\left(\begin{array}{cc}
1 & \sqrt{k}e^{\frac{x-1}{m}2\pi i} \\
\sqrt{k}e^{-\frac{x-1}{m}2\pi i} & 1
\end{array}\right),\\
&D_y=\frac{1}{\sqrt{2}m}\left(\begin{array}{cc}
1 & -\sqrt{k}e^{-\frac{y-1}{m}2\pi i} \\
-\sqrt{k}e^{\frac{y-1}{m}2\pi i} & 1
\end{array}\right),
\end{aligned}
\end{equation}
where $i=\sqrt{-1}$.
It is straightforward to check $B_m(x,y)=\tr(C_xD_y)$ for all $x,y=1,\cdots,m$, which means that $\operatorname{rank}_{\textup{psd}}(B_m)\le2$.
Also note that $\rank(B_m)\ge2$, thus we have that $\operatorname{rank}_{\textup{psd}}(B_m)=2$.
\end{proof}

Based on the properties of $B_m$, we can now characterize $A_m$ accordingly.

\begin{lemma}
For $A_m$ definced in Eq.\eqref{P_example_decay}, we have
\begin{enumerate}
    \item $\operatorname{rank}_+(A_m)> \log_2(m/2)$, when $k>\frac{\cos(2\pi/m)}{\cos^2(\pi/m)}$,
    \item $\operatorname{rank}_{\textup{psd}}(A_m)\le 3$.
\end{enumerate}
\end{lemma}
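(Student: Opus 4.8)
Both parts should follow from Lemma~\ref{lemma:B_ranks} together with purely elementary manipulations, since $A_m$ is the matrix $\tfrac{1+q^2}{2}B_m$ bordered by one extra row and column. For Part 1 I would use that the nonnegative rank is non-increasing under deletion of rows/columns and invariant under positive rescaling: deleting the first row and first column of $A_m$ leaves exactly $\tfrac{1+q^2}{2}B_m$, so $\operatorname{rank}_+(A_m)\ge\operatorname{rank}_+(\tfrac{1+q^2}{2}B_m)=\operatorname{rank}_+(B_m)$, and Part 1 of Lemma~\ref{lemma:B_ranks} then gives $\operatorname{rank}_+(A_m)>\log_2(m/2)$ whenever $k>\frac{\cos(2\pi/m)}{\cos^2(\pi/m)}$. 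This step is immediate.

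For Part 2 the plan is to lift the $2\times2$ PSD decomposition $\{C_x\},\{D_y\}$ of $B_m$ from Eq.\eqref{eq:psd_of_B} to a $3\times3$ PSD decomposition of $A_m$, indexing rows and columns by $\{0,1,\dots,m\}$. For $x,y\in\{1,\dots,m\}$ I would set $\widetilde C_x=\sqrt{\tfrac{1+q^2}{2}}(C_x\oplus 0)$ and $\widetilde D_y=\sqrt{\tfrac{1+q^2}{2}}(D_y\oplus 0)$, so that $\tr(\widetilde C_x\widetilde D_y)=\tfrac{1+q^2}{2}\tr(C_xD_y)=\tfrac{1+q^2}{2}B_m(x,y)=A_m(x,y)$ on that block; and for the bordering index I would take $\widetilde C_0=\widetilde D_0=\operatorname{diag}(a,a,c)$ with $a,c\ge0$. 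Since the upper-left $2\times2$ block of $\widetilde C_0$ is the scalar matrix $aI_2$ and every $\widetilde D_y$ with $y\ge1$ carries a zero in its third diagonal slot, $\tr(\widetilde C_0\widetilde D_y)=\sqrt{\tfrac{1+q^2}{2}}\,a\,\tr(D_y)$ is automatically independent of $y$; using $\tr(C_x)=\tr(D_y)=\sqrt2/m$ from Eq.\eqref{eq:psd_of_B}, this equals the border value $\tfrac{q(1-q)}{2m}$ precisely when $a=\tfrac{q(1-q)}{2\sqrt{1+q^2}}$, and symmetrically for $\tr(\widetilde C_x\widetilde D_0)$. Finally $\tr(\widetilde C_0\widetilde D_0)=2a^2+c^2$, and an elementary computation using only $0<q<1$ shows $\tfrac{(1-q)^2}{2}-2a^2=\tfrac{(1-q)^2}{2(1+q^2)}>0$, so $c=\big(\tfrac{(1-q)^2}{2(1+q^2)}\big)^{1/2}$ makes the corner entry equal $\tfrac{(1-q)^2}{2}$. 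All the matrices involved are manifestly PSD ($C_x,D_y\succeq0$ by Lemma~\ref{lemma:B_ranks}, and $\operatorname{diag}(a,a,c)\succeq0$ because $a,c\ge0$), which establishes $\operatorname{rank}_{\textup{psd}}(A_m)\le3$.

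The only substantive point is finding this particular embedding in Part 2: a naive block-diagonal combination of a PSD factorization of the border (whose nonnegative rank is $2$) with one of $\tfrac{1+q^2}{2}B_m$ would cost $2+2=4$ dimensions, and the trick is that the identity-proportional block $aI_2$ lets the two $B_m$-dimensions simultaneously carry the $B_m$ data and produce the constant border entries, so that only one extra dimension is needed for the corner. Beyond arriving at this construction, the remaining work — checking the $(0,0)$, $(0,y)$, $(x,0)$ and $(x,y)$ entries and the single scalar identity $2a^2+c^2=\tfrac{(1-q)^2}{2}$ — is routine bookkeeping, and I do not expect a genuine obstacle there.
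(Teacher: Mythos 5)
Your proposal is correct and uses essentially the same construction as the paper: Part 1 is the identical submatrix argument, and your Part 2 factorization $\widetilde C_0=\widetilde D_0=\operatorname{diag}(a,a,c)$ with $a=\tfrac{q(1-q)}{2\sqrt{1+q^2}}$, $c=\tfrac{1-q}{\sqrt{2(1+q^2)}}$ is exactly the paper's $C_1'=D_1'=\tfrac{1-q}{2\sqrt{1+q^2}}\operatorname{diag}(\sqrt2,q,q)$ up to a permutation of the three PSD coordinates (you place the extra border dimension last, the paper places it first).
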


\begin{proof}
(Part 1)

Since $B_m$ is a submatrix of $A_m$, we have $\operatorname{rank}_+(A_m)\ge\operatorname{rank}_+(B_m)$.
Meanwhile, according to Lemma~\ref{lemma:B_ranks}, it holds that $\operatorname{rank}_+(B_m)> \log_2(m/2)$, if $k>\frac{\cos(2\pi/m)}{\cos^2(\pi/m)}$.

(Part 2)

We take the same $\{C_x,D_y\}$ as in Eq.\eqref{eq:psd_of_B}, then let
    \begin{align*}
        C_1'&=D_1'=
        \frac{1-q}{2\sqrt{1+q^2}}
        \left(\begin{array}{ccc}
             \sqrt{2}&&  \\
             &q&\\
             &&q
        \end{array}\right),\\
        C_x'&=\sqrt{\frac{1+q^2}{2}}\left(\begin{array}{cc}
             0&  \\
             &C_{x-1}
        \end{array}\right) \text{ for } 2\le x\le m+1,\\
        D_y'&=\sqrt{\frac{1+q^2}{2}}\left(\begin{array}{cc}
             0&  \\
             &D_{y-1}
        \end{array}\right) \text{ for } 2\le y\le m+1.
    \end{align*}
We have
    \begin{align*}
        \tr(C_1'D_1')&=\frac{(1-q)^2}{2},\\
        \tr(C_x'D_1')&=\tr(C_1'D_y')=\frac{q(1-q)}{2m} \text{ for } 2\le x,y\le m+1,\\
        \tr(C_x'D_y')&=\frac{1+q^2}{2}\tr(C_{x-1}D_{y-1})=\frac{1+q^2}{2}(P_{m,k})_{x-1,y-1} \text{ for } 2\le x,y\le m+1.
    \end{align*}
    Thus $C_x',D_y'$ is a PSD decomposition of $A_m$, implying that $\operatorname{rank}_{\textup{psd}}(A_m)\le 3$.
\end{proof}

\section{The mathematical characterization for sudden death}\label{app:proof_of_SuddenSeath_equivalent}

We prove the following result, which is equivalent to Theorem~\ref{thm:sudden_death}.

\begin{theorem}\label{thm:sudden_death_app}
    For any correlation $P\in\mbb{R}_{>0}^{n\times n}$, the following statements are equivalent:
    \begin{enumerate}
        \item $\sup_{\lambda\in \Lambda(P)}\mathrm{C}_{\lambda}(P)=\infty$.
        \item $\Lambda(P)$ is a right open interval.
        \item There exists $\lambda>0$, such that there exist $\mathbf{s},\mathbf{t}\in\mbb{R}_{\ge0}^{n}$ with $\norm{\mathbf{s}}_1=\norm{\mathbf{t}}_1=1$ such that $\hat{P}_\lambda^{\mathbf{s},\mathbf{t}}(x,y)$ defined in Eq.\eqref{DefOfPLambmaHat} is nonnegative for all $x,y$. However, $P$ is not reachable for any quantum protocols under noise $\E_\lambda\otimes\E_\lambda$.
    \end{enumerate}
\end{theorem}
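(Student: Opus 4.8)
\emph{Proof idea.} If $\operatorname{rank}(P)=1$, then $P$ is a product distribution, $\Lambda(P)=[0,1]$, $\mathrm{C}_\lambda(P)=1$ for all $\lambda$, and all three statements fail, so the equivalence holds trivially; assume from now on $\operatorname{rank}(P)\ge 2$, so by Eq.\eqref{eq:reachabilitylemma} we have $\lambda^{*}:=\sup\Lambda(P)<1$, and (by the argument around Eq.\eqref{eq:derivative}) $\Lambda(P)$ is an interval containing $0$. The one auxiliary fact I will use repeatedly is the monotonicity inherent in Eq.\eqref{eq:derivative}, which holds for any $\mathbf{s},\mathbf{t}\ge 0$ with $\norm{\mathbf{s}}_1=\norm{\mathbf{t}}_1=1$ and any $\lambda<1$: if $\hat{P}_\lambda^{\mathbf{s},\mathbf{t}}\ge 0$ entrywise, then each map $\mu\mapsto\hat{P}_\mu^{\mathbf{s},\mathbf{t}}(x,y)$ is non-increasing on $[0,\lambda]$ (in particular $\hat{P}_\mu^{\mathbf{s},\mathbf{t}}\ge 0$ there). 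I will prove $(1)\Leftrightarrow(2)$ and $(2)\Leftrightarrow(3)$.

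\emph{$(1)\Leftrightarrow(2)$.} I establish the equivalent statement ``$\lambda^{*}\in\Lambda(P)$ iff $\sup_{\lambda\in\Lambda(P)}\mathrm{C}_\lambda(P)<\infty$''. If $\lambda^{*}\in\Lambda(P)$, Proposition~\ref{reachable} produces strictly positive $\mathbf{s},\mathbf{t}$ with $\hat{P}_{\lambda^{*}}^{\mathbf{s},\mathbf{t}}\ge 0$; monotonicity keeps $\hat{P}_\lambda^{\mathbf{s},\mathbf{t}}\ge 0$ for all $\lambda\in[0,\lambda^{*}]$, and Eq.\eqref{eq:upper_bound} with the elementary bound $\operatorname{rank}_{\mathrm{psd}}(\hat{P}_\lambda^{\mathbf{s},\mathbf{t}})\le n$ gives $\mathrm{C}_\lambda(P)\le n\lceil 1/\min_{x,y}\{s_x,t_y\}\rceil$ uniformly over $\Lambda(P)=[0,\lambda^{*}]$. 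Conversely, if $\mathrm{C}_\lambda(P)\le M$ throughout $\Lambda(P)$, pick $\lambda_k\uparrow\lambda^{*}$ with witnessing seed states $\sigma_k$ on $\mbb{C}^{d_k}\otimes\mbb{C}^{d_k}$, $d_k\le M$, together with POVMs; after passing to a subsequence with a common dimension $d$, compactness of the set of density operators on $\mbb{C}^{d}\otimes\mbb{C}^{d}$ and of POVMs yields limits $\sigma,\{E_x\},\{F_y\}$, and continuity of $(\lambda,\sigma)\mapsto\E_\lambda\otimes\E_\lambda(\sigma)$ gives $P(x,y)=\tr((E_x\otimes F_y)\,\E_{\lambda^{*}}\otimes\E_{\lambda^{*}}(\sigma))$, i.e.\ $\lambda^{*}\in\Lambda(P)$.

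\emph{$(2)\Leftrightarrow(3)$.} For $(2)\Rightarrow(3)$ take $\lambda=\lambda^{*}>0$: it is not in $\Lambda(P)=[0,\lambda^{*})$, and choosing for each $k$, via Proposition~\ref{reachable}, strictly positive $\mathbf{s}^{(k)},\mathbf{t}^{(k)}$ of unit $1$-norm with $\hat{P}_{\lambda_k}^{\mathbf{s}^{(k)},\mathbf{t}^{(k)}}\ge 0$ ($\lambda_k\uparrow\lambda^{*}$), compactness of the simplex $\{\mathbf{s}\in\mbb{R}_{\ge0}^{n}:\norm{\mathbf{s}}_1=1\}$ gives a limit point $(\mathbf{s}^{*},\mathbf{t}^{*})$, nonnegative of unit $1$-norm, with $\hat{P}_{\lambda^{*}}^{\mathbf{s}^{*},\mathbf{t}^{*}}\ge 0$ by continuity, witnessing (3). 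For the harder direction $(3)\Rightarrow(2)$, suppose $\lambda_0>0$ and nonnegative $\mathbf{s}_0,\mathbf{t}_0$ of unit $1$-norm satisfy $\hat{P}_{\lambda_0}^{\mathbf{s}_0,\mathbf{t}_0}\ge 0$ while $\lambda_0\notin\Lambda(P)$. By monotonicity $\hat{P}_\mu^{\mathbf{s}_0,\mathbf{t}_0}\ge 0$ for all $\mu\in[0,\lambda_0]$, and in fact $\hat{P}_\mu^{\mathbf{s}_0,\mathbf{t}_0}$ is \emph{strictly} positive for every $\mu\in[0,\lambda_0)$: if some entry vanished at such a $\mu$, monotonicity (non-increasing, and $\hat{P}_{\lambda_0}^{\mathbf{s}_0,\mathbf{t}_0}\ge 0$) would force that entry to vanish on all of $[\mu,\lambda_0]$, but the entry is a polynomial of degree $\le 2$ in the noise parameter, hence identically zero, hence its constant term $P(x_0,y_0)$ would vanish, contradicting $P\in\mbb{R}_{>0}^{n\times n}$. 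Now fix $\mu\in[0,\lambda_0)$; since $\hat{P}_\mu^{\mathbf{s}_0,\mathbf{t}_0}>0$ and $(\mathbf{s},\mathbf{t})\mapsto\hat{P}_\mu^{\mathbf{s},\mathbf{t}}$ is continuous, the perturbed strictly positive vectors $\mathbf{s}'=(1-\delta)\mathbf{s}_0+\delta\mathbf{u}$, $\mathbf{t}'=(1-\delta)\mathbf{t}_0+\delta\mathbf{u}$ (with $\mathbf{u}=(1/n,\dots,1/n)$ and $\delta>0$ small) still give $\hat{P}_\mu^{\mathbf{s}',\mathbf{t}'}\ge 0$, so Proposition~\ref{reachable} yields $\mu\in\Lambda(P)$. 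Thus $[0,\lambda_0)\subseteq\Lambda(P)$, which together with $\lambda_0\notin\Lambda(P)$ and the fact that $\Lambda(P)$ is an interval forces $\Lambda(P)=[0,\lambda_0)$, a right-open interval, i.e.\ (2).

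I expect the only real obstacle to be $(3)\Rightarrow(2)$, specifically the upgrade from the merely \emph{non-negative} feasible pair $(\mathbf{s}_0,\mathbf{t}_0)$ handed to us by (3) to the \emph{strictly positive} pair demanded by Proposition~\ref{reachable}. The lever that makes this work is that $\mu\mapsto\hat{P}_\mu^{\mathbf{s},\mathbf{t}}(x,y)$ is a quadratic whose constant term is $P(x,y)>0$, so it cannot vanish on a nondegenerate subinterval; combined with the monotonicity this forces $\hat{P}_\mu^{\mathbf{s}_0,\mathbf{t}_0}$ to be strictly positive for $\mu<\lambda_0$, after which the small perturbation of $(\mathbf{s}_0,\mathbf{t}_0)$ stays feasible. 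Conceptually, (3) fails exactly when $\Lambda(P)$ coincides with the a priori larger set $\{\lambda:\exists\,\mathbf{s},\mathbf{t}\ge 0,\ \norm{\mathbf{s}}_1=\norm{\mathbf{t}}_1=1,\ \hat{P}_\lambda^{\mathbf{s},\mathbf{t}}\ge 0\}$, and the theorem asserts this is equivalent to $\Lambda(P)$ being right closed.
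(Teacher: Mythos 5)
Your proof is correct and follows the same overall skeleton as the paper's (prove $(1)\Leftrightarrow(2)$ and $(2)\Leftrightarrow(3)$, using the backward-monotonicity of $\mu\mapsto\hat{P}_\mu^{\mathbf{s},\mathbf{t}}$ and a perturbation to strictly positive weights), but it departs from the paper's argument in two substantive ways, and both departures are valid and arguably cleaner.

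For the direction ``bounded cost $\Rightarrow \lambda^*\in\Lambda(P)$'' (which underlies $(2)\Rightarrow(1)$), you take a sequence $\lambda_k\uparrow\lambda^*$ and invoke compactness of the density operators and POVMs of bounded dimension directly, passing to a limiting noisy quantum protocol at $\lambda^*$. The paper instead stays in the classical parameter space: it shows that the lower bound $\mathrm{C}_\lambda(P)\ge\frac{1}{1-\lambda}(\frac{\sum_bP(x,b)}{s_x}-\lambda)$ of Eq.\eqref{eq:proof_of_C_LB} forces the witness weights $(\mathbf{s},\mathbf{t})$ into a compact subset bounded away from the boundary of the simplex, and then intersects the nested closed sets $ST_\lambda$. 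Your route is shorter and avoids needing the lower bound on $(\mathbf{s},\mathbf{t})$; the paper's route has the advantage of never leaving the simplex picture, so it re-uses the same compactness machinery already set up for $(2)\Rightarrow(3)$.

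For $(3)\Rightarrow(2)$, the paper runs a direct case analysis on which of $(\mathbf{s}_u)_x,(\mathbf{t}_u)_y$ vanish and shows $\partial\hat{P}_\lambda/\partial\lambda<0$ on $[0,\lambda_u)$, giving strict positivity at $\lambda_u-\epsilon$. You instead observe that if an entry of $\hat{P}_\mu^{\mathbf{s}_0,\mathbf{t}_0}$ vanished for some $\mu<\lambda_0$, then by the backward monotonicity together with nonnegativity it would vanish on all of $[\mu,\lambda_0]$, a nondegenerate interval, and since each entry is a degree-$\le 2$ polynomial in the noise parameter with positive constant term $P(x_0,y_0)$, this is impossible. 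Both give the same conclusion ($\hat{P}_\mu^{\mathbf{s}_0,\mathbf{t}_0}>0$ entrywise for $\mu\in[0,\lambda_0)$), after which the perturbation to strictly positive $(\mathbf{s}',\mathbf{t}')$ and application of Proposition~\ref{reachable} is identical. Your polynomial-vanishing argument is a neat alternative to the case split. Your separate treatment of $\operatorname{rank}(P)=1$ is a small completeness point the paper elides (there the theorem is vacuously true since all three statements fail), and it is correct.
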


\begin{proof}

``$1\Rightarrow2$": If $\Lambda(P)$ is right closed, we choose $\lambda_u=\sup \Lambda(P)\in \Lambda(P)$. Thus there exist $\mathbf{s},\mathbf{t}\in\mbb{R}_{>0}^{n}$ with $\norm{\mathbf{s}}_1=\norm{\mathbf{t}}_1=1$ such that $\hat{P}_{\lambda_u}^{\mathbf{s},\mathbf{t}}(x,y)\ge0$. According to Eq.\eqref{eq:upper_bound}, $\mathrm{C}_{{\lambda_u}}(P)$ has an upper bound.

Since for any $\lambda\le\lambda_u$, it holds that $\mathrm{C}_{{\lambda}}(P)\le\mathrm{C}_{{\lambda_u}}(P)$, i.e., Alice and Bob can simulate strong noise with weak noise by introducing more noise. Then we have that $\sup_{\lambda\in \Lambda(P)}\mathrm{C}_{{\lambda}}(P)$ has an upper bound, which contradicts with $\sup_{\lambda\in \Lambda(P)}\mathrm{C}_{\lambda}(P)=\infty$.

``$2\Rightarrow3$": Let
\begin{align*}
    ST_\lambda=\left\{(\mathbf{s},\mathbf{t})\in\mathbb{R}_{\ge0}^n\times\mathbb{R}_{\ge0}^n\right|\hat{P}_\lambda^{\mathbf{s},\mathbf{t}}\ge0,\norm{\mathbf{s}}_1=\norm{\mathbf{t}}_1=1\}.
\end{align*}
Note that in this definition we allow the entries of $\mathbf{s}$ and $\mathbf{t}$ to be $0$, thus
$ST_\lambda$ is closed and $ST_\lambda\subset\left\{(\mathbf{s},\mathbf{t})\in\mathbb{R}_{\ge0}^n\times\mathbb{R}_{\ge0}^n\right|\norm{\mathbf{s}}_1=\norm{\mathbf{t}}_1=1\}$ which is a compact set. And also $\forall\ 0\le\lambda\le\lambda'\in \Lambda(P)$, $ST_\lambda\supseteq ST_{\lambda'}\neq\emptyset$, which is directly from Eq.\eqref{eq:derivative}. So we have $\bigcap_{\lambda\in \Lambda(P)}ST_\lambda\neq\emptyset$.

For any $\lambda\in \Lambda(P)$ and $(\mathbf{s}',\mathbf{t}')\in\bigcap_{\lambda\in \Lambda(P)}ST_\lambda$, we always have $\hat{P}_\lambda^{\mathbf{s}',\mathbf{t}'}(x,y)\ge0$ for all $x,y$. Then $\hat{P}_{\sup \Lambda(P)}^{\mathbf{s}',\mathbf{t}'}(x,y)=\lim_{\lambda\rightarrow\sup \Lambda(P)}\hat{P}_{\lambda}^{\mathbf{s}',\mathbf{t}'}(x,y)\ge0$.
That is, for $\lambda_u=\sup \Lambda(P)$, there exist $\mathbf{s},\mathbf{t}\in\mbb{R}_{\ge0}^{n}$ with $\norm{\mathbf{s}}_1=\norm{\mathbf{t}}_1=1$ such that $\hat{P}_{\lambda_u}^{\mathbf{s},\mathbf{t}}(x,y)\ge0$. However, since $\sup \Lambda(P)\notin \Lambda(P)$, there do not exist $\mathbf{s},\mathbf{t}\in\mbb{R}_{>0}^{n}$ with $\norm{\mathbf{s}}_1=\norm{\mathbf{t}}_1=1$ satisfying that $\hat{P}_{\lambda_u}^{\mathbf{s},\mathbf{t}}(x,y)\ge0$.

``$3\Rightarrow2$": Suppose that $\lambda_u$ satisfies Statement 3 and $\hat{P}_{\lambda_u}^{ \mathbf{s}_u, \mathbf{t}_u}(x,y)\ge0$ for all $x,y$, where $ \mathbf{s}_u, \mathbf{t}_u\in\mbb{R}_{\ge0}^{n}$ satisfy $\norm{ \mathbf{s}_u}_1=\norm{ \mathbf{t}_u}_1=1$. We prove that for any $\epsilon>0$, $\lambda_u-\epsilon\in \Lambda(P)$.

For any $x,y$, if $( \mathbf{s}_u)_x=( \mathbf{t}_u)_y=0$, then $\hat{P}_{\lambda-\epsilon}^{ \mathbf{s}_u, \mathbf{t}_u}(x,y)=P(x,y)>0$.
And if $( \mathbf{s}_u)_x=0,( \mathbf{t}_u)_y>0$, we have
\begin{align*}
\frac{\partial \hat{P}_{\lambda}^{ \mathbf{s}_u, \mathbf{t}_u}(x,y)}{\partial \lambda}|_{\lambda=\lambda_u-\epsilon}=-( \mathbf{t}_u)_y\sum_bP(x,b)<0.
\end{align*}
Thus $\hat{P}_{\lambda_u-\epsilon}^{ \mathbf{s}_u, \mathbf{t}_u}(x,y)> \hat{P}_{\lambda_u}^{ \mathbf{s}_u, \mathbf{t}_u}(x,y)\ge0$.
The case that $( \mathbf{s}_u)_x>0$ and $( \mathbf{t}_u)_y=0$ is similiar.

Lastly, if $( \mathbf{s}_u)_x>0,( \mathbf{t}_u)_y>0$, then
\begin{align*}
\frac{\partial \hat{P}_{\lambda}^{ \mathbf{s}_u, \mathbf{t}_u}(x,y)}{\partial \lambda}|_{\lambda=\lambda_u-\epsilon}=\frac{\partial \hat{P}_{\lambda}^{ \mathbf{s}_u, \mathbf{t}_u}(x,y)}{\partial \lambda}|_{\lambda=\lambda_u}-2\epsilon ( \mathbf{s}_u)_x( \mathbf{t}_u)_y\le-2\epsilon ( \mathbf{s}_u)_x( \mathbf{t}_u)_y<0.
\end{align*}
Thus $\hat{P}_{\lambda_u-\epsilon}^{ \mathbf{s}_u, \mathbf{t}_u}(x,y)> \hat{P}_{\lambda_u}^{ \mathbf{s}_u, \mathbf{t}_u}(x,y)\ge0$.

In conclusion, we always have $\hat{P}_{\lambda_u-\epsilon}^{ \mathbf{s}_u, \mathbf{t}_u}(x,y)>0$. Let $w=\underset{x,y}{\min}\hat{P}_{\lambda_u-\epsilon}^{ \mathbf{s}_u, \mathbf{t}_u}(x,y)>0$. Note that
\begin{align*}
\left|\hat{P}_\lambda^{\mathbf{s},\mathbf{t}}(x,y)-\hat{P}_\lambda^{\mathbf{s}',\mathbf{t}'}(x,y)\right|\le3\left| s_x- s'_x\right|+3\left| t_y- t'_y\right|
\end{align*}
for any $\lambda,\mathbf{s},\mathbf{t},\mathbf{s}',\mathbf{t}',x,y$. Then for any $\mathbf{s},\mathbf{t}\in\mathbb{R}_{>0}^n$ that satisfy $\norm{\mathbf{s}- \mathbf{s}_u}_2\le\frac{w}{6}$ and $\norm{\mathbf{t}- \mathbf{t}_u}_2\le\frac{w}{6}$ (such $\mathbf{s},\mathbf{t}$ exist), we have
\begin{align*}
\hat{P}_{\lambda_u-\epsilon}^{\mathbf{s},\mathbf{t}}(x,y)\ge \hat{P}_{\lambda_u-\epsilon}^{ \mathbf{s}_u, \mathbf{t}_u}(x,y)-w\ge0.
\end{align*}
Thus $\lambda_u-\epsilon\in \Lambda(P)$, which means $\Lambda(P)=\left[0,\lambda_u\right)$.

``$2\Rightarrow1$": Let $\lambda_u=\sup \Lambda(P)$. Suppose $\sup_{\lambda\in \Lambda(P)}\mathrm{C}_{\lambda}(P)<d$ for some constant $d$. Then from Eq.\eqref{eq:derivative}, for any $\lambda\in \Lambda(P)$, there exist $\mathbf{s},\mathbf{t}\in\mbb{R}_{>0}^n$ such that $\norm{\mathbf{s}}_1=\norm{\mathbf{t}}_1=1$ and $\hat{P}_\lambda^{\mathbf{s},\mathbf{t}}(x,y)\ge0$ for all $x,y$, where $ s_x\ge\frac{\sum_bP(x,b)}{d(1-\lambda)+\lambda}$ and $ t_y\ge\frac{\sum_aP(a,y)}{d(1-\lambda)+\lambda}$, as shown in Eq.\eqref{eq:proof_of_C_LB}. Let
\begin{align*}
v=\min_{x,y}\left\{\frac{\sum_bP(x,b)}{d(1-\lambda_u)+\lambda_u},\frac{\sum_aP(a,y)}{d(1-\lambda_u)+\lambda_u}\right\}.
\end{align*}
We have $\mathbf{s},\mathbf{t}\in\mbb{R}_{\ge v}^n$. Again let
\begin{align*}
    ST_\lambda=\left\{(\mathbf{s},\mathbf{t})\in\mathbb{R}_{\ge v}^n\times\mathbb{R}_{\ge v}^n\right|\hat{P}_\lambda^{\mathbf{s},\mathbf{t}}\ge0,\norm{\mathbf{s}}_1=\norm{\mathbf{t}}_1=1\}.
\end{align*}
Note that $ST_\lambda$ is closed and $ST_\lambda\subset\left\{(\mathbf{s},\mathbf{t})\in\mathbb{R}_{\ge v}^n\times\mathbb{R}_{\ge v}^n\right|\norm{\mathbf{s}}_1=\norm{\mathbf{t}}_1=1\}$ which is compact. And for any $\lambda\le\lambda'\in \Lambda(P)$, $ST_\lambda\supseteq ST_{\lambda'}\neq\emptyset$, which is directly from Eq.\eqref{eq:derivative}. So we have $\bigcap_{\lambda\in \Lambda(P)}ST_\lambda\neq\emptyset$.

For any $\lambda\in \Lambda(P)$ and $(\mathbf{s}',\mathbf{t}')\in\bigcap_{\lambda\in \Lambda(P)}ST_\lambda$, we always have $\hat{P}_\lambda^{\mathbf{s}',\mathbf{t}'}(x,y)\ge0$ for all $x,y$. Then
\begin{equation}
\hat{P}_{\lambda_u}^{\mathbf{s}',\mathbf{t}'}(x,y)=\hat{P}_{\sup \Lambda(P)}^{\mathbf{s}',\mathbf{t}'}(x,y)=\lim_{\lambda\rightarrow\sup \Lambda(P)}\hat{P}_{\lambda}^{\mathbf{s}',\mathbf{t}'}(x,y)\ge0.
\end{equation}
Thus $\sup \Lambda(P)\in \Lambda(P)$, which contradicts $\Lambda(P)$ is right open. Thus we have $\sup_{\lambda\in \Lambda(P)}\mathrm{C}_{\lambda}(P)=\infty$.
\end{proof}

\end{document}